\title{Domain Reasoning in TopKAT} 
\author{Cheng Zhang}{Boston University}{czhang03@bu.edu}{https://orcid.org/0000-0002-8197-6181}{National Science Foundation Grant No. 2040249 and No. 2314324.}
\author{Arthur Azevedo de Amorim}{Rochester Institute of Technology}{arthur.aa@gmail.com}{0000-0001-9916-6614}{National Science Foundation Grant No. 2314323.}
\author{Marco Gaboardi}{Boston University}{gaboardi@bu.edu}{0000-0002-5235-7066}{National Science Foundation Grant No. 2040249 and No. 2314324.}
\authorrunning{C. Zhang and A.\,A. de Amorim and M. Gaboardi}
\keywords{Kleene algebra,
    Kleene Algebra With Tests,
    Kleene Algebra With Domain,
    Kleene Algebra With Top and Tests,
    Completeness,
    Decidability
} 
\crefname{ineq}{inequality}{inequalities}
\crefname{impl}{implication}{implications}
\Crefname{impl}{Implication}{Implications}
\crefname{equiv}{equivalence}{equivalences}
\Crefname{equiv}{Equivalence}{Equivalences}
\crefname{diag}{diagram}{diagrams}
\Crefname{diag}{Diagram}{Diagrams}
\DeclareMathOperator{\dom}{\mathrm{dom}}
\DeclareMathOperator{\cod}{\mathrm{cod}}
\DeclareMathOperator{\Img}{\mathbf{Im}}
\newcommand{\twhere}{\text{ where }}
\newcommand{\itemTitle}[1]{\textbf{#1}}
\newcommand{\KAT}{\mathsf{KAT}}
\newcommand{\TopKAT}{\mathsf{TopKAT}}
\newcommand{\At}{\mathbf{\mathrm{At}}}
\newcommand{\op}{\mathrm{op}}
\newcommand{\REL}{\mathbf{\mathrm{REL}}}
\newcommand{\TopREL}{\mathbf{\mathrm{TopREL}}}
\newcommand{\TopGREL}{\mathbf{\mathrm{TopGREL}}}
\begin{document}

\maketitle

\begin{abstract}
  TopKAT is the algebraic theory of Kleene algebra with tests (KAT) extended
  with a top element. Compared to KAT, one pleasant feature of TopKAT is that,
  in relational models, the top element allows us to express the domain and
  codomain of a relation.  This enables several applications in program
  logics, such as proving under-approximate specifications or reachability
  properties of imperative programs.  However, while TopKAT inherits many
  pleasant features of KATs, such as having a decidable equational theory, it is
  incomplete with respect to relational models. In other words, there are
  properties that hold true of all relational TopKATs but cannot be proved with
  the axioms of TopKAT.  This issue is potentially worrisome for
  program-logic applications, in which relational models play a key role.

  In this paper, we further investigate the completeness properties of TopKAT
  with respect to relational models. We show that TopKAT is complete with
  respect to (co)domain comparison of KAT terms, but incomplete when comparing
  the (co)domain of arbitrary TopKAT terms. Since the encoding of
  under-approximate specifications in TopKAT hinges on this type of formula, the
  aforementioned incompleteness results have a limited impact when using TopKAT
  to reason about such specifications.
\end{abstract}

\section{Introduction}
  
Kleene algebra with tests (KAT) is an algebraic framework that extends Kleene
algebra with an embedded Boolean algebra to model control structures like
if-statement and while-loops~\cite{Kozen_1997}.  This extension enables us to
reason about several properties of imperative programs.  For example, one of the
key early results in the area was that KAT can encode Hoare logic, in the sense
that any proof in the logic's propositional fragment can be carried out
faithfully using KAT equations~\cite{Manes_Arbib_1986, Kozen_2000}.

Some applications, however, require us to look beyond KAT.  For example, Zhang
et al.~\cite{Zhang_de_Amorim_Gaboardi_2022} recently proved that KAT alone
cannot be used to encode incorrectness logic~\cite{OHearn_2020,
  devries_ReverseHoareLogic_2011}---a close cousin of Hoare logic with
applications in bug finding~\cite{Le_Raad_Villard_Berdine_Dreyer_OHearn_2022,
  Raad_Berdine_Dang_Dreyer_OHearn_Villard_2020}.  A similar result was proved by
Struth~\cite{Struth_2015}, who showed that KAT cannot encode weakest liberal
preconditions. If we view a program as a relation between its input and output
states, both of these limitations arise from KAT's lack of power to encode the
(co)domain of a relation.  Indeed, Möller et
al.~\cite{Möller_O’Hearn_Hoare_2021} proved that incorrectness logic could be
encoded by extending KAT with a codomain operation.  Independently, Zhang et
al. provided a similar encoding~\cite{Zhang_de_Amorim_Gaboardi_2022} by
extending KAT with a top element, which can be used to express inequalities
between codomains.  They dubbed the resulting algebraic structure a TopKAT.

The present paper investigates the expressive power of TopKAT as a tool for
(co)domain reasoning.  As noted by Zhang et al.~\cite{Zhang_de_Amorim_Gaboardi_2022}, 
one limitation of TopKAT is that it is
not expressive enough to derive all valid equations between relations.  More
precisely, Zhang et al.'s encoding of incorrectness logic interprets the top
element of the algebra as the complete relation, which relates all pairs of
program states. Under this interpretation, the inequality \(p  \top  p  \geq  p\) is
valid, but unprovable using the theory of
TopKAT~\cite{Zhang_de_Amorim_Gaboardi_2022}.  This a potential issue when using
TopKAT to reason in incorrectness logic: though Zhang et al.'s encoding covers
all the rules of propositional incorrectness logic, there could be inequalities 
about (co)domain that fall outside this fragment and cannot be established solely
by the theory of TopKAT.

Pous et al.~\cite{Pous_Wagemaker_2022, Pous_Wagemaker_2023} were able to make
some progress on the issue, by showing we can obtain a complete axiomatic system
for relational models TopKATs by adding in the inequality \(p  \top  p  \geq  p\) as an
additional axiom. In this paper, we look at the question from a different angle,
instead of working with a more complex theory,
we show that the original theory of TopKAT is complete with respect to relational models
for (co)domain comparisons, namely the inequalities of the form 
\( \top  t_{1}  \geq   \top  t_{2}\) or \(t_{1}  \top   \geq  t_{2}  \top \) where \(t_{1}, t_{2}\) are KAT terms.  
Since these inequalities suffice to encode incorrectness logic, 
this completeness result lays a solid foundation for encoding program logics in TopKAT.
We have also showed that this completeness result is tight, 
in the sense that it does not extend to the case where \(t_{1}\) and \(t_{2}\) contain the top element, 
by explicitly constructing two TopKAT terms that witness the incompleteness.

The result above is enabled by the homomorphic structure of the
reduction~\cite{Zhang_de_Amorim_Gaboardi_2022,Pous_Rot_Wagemaker_2021} from
TopKAT to KAT.  This discovery also let us shorten the proofs of
previous results~\cite{Zhang_de_Amorim_Gaboardi_2022}, and enables systematic generation of TopKAT complete interpretations from complete interpretations of KAT.  We believe that this new representation of
the reduction technique could also be of independent interest.

\itemTitle{Structure of this paper and contributions:} 
In~\Cref{sec: Preliminaries}, 
we present several previous results on KAT and TopKAT.
Inspired by universal algebra~\cite{burrisCourseUniversalAlgebra1981},
we characterize fundamental concepts, like interpretation and completeness, 
using homomorphisms.
In~\Cref{sec: general completeness},
we uncover additional structure of the reduction technique~\cite{Kozen_Smith_1997,Pous_Rot_Wagemaker_2021}
in the case of TopKAT: the reduction from TopKAT to KAT is a TopKAT homomorphism. 
This discovery not only allows us to simplify several previous 
results~\cite{Zhang_de_Amorim_Gaboardi_2022} by avoiding tedious induction proofs;  
but also enables the techniques used in the later section.
\Cref{sec: domain completeness of TopKAT} presents the completeness results of TopKAT 
with respect (co)domain comparison.
The codomain completeness result is proven by 
an equality that connects codomain operation with the language interpretation,
and the domain completeness is then proven by applying the 
codomain completeness result to the opposite TopKAT.

\section{Preliminaries}\label{sec: Preliminaries}

\subsection{Extensions of Kleene algebra And Their Models}

A \emph{Kleene algebra} is an idempotent semiring with a star operation, written
$p^*$, that satisfies the following \emph{unfolding}, \emph{left induction},
and \emph{right induction} rules:
\[
    p^* = 1 + p p^* = 1 + p^* p, \\
    p r + q  \leq  r  \implies  p^* q  \leq  r, \\
    r p + q  \leq  r  \implies  q p^*  \leq  r;
\]
the ordering here is the conventional ordering in idempotent semirings: \(p  \leq  q  \triangleq  p + q = q.\)
It is known that the right-hand version of unfolding and induction rule 
can be removed while preserving the same equational theory~\cite{Kozen_Silva_2020}.
Yet, we will focus on the standard definition of KA in this paper.
\begin{lemma}\label{the: well known fact about KA}
    Following are well-known facts in Kleene algebra
    \begin{itemize}
        \item All the Kleene algebra operations preserve order.
        \item The following equations are true for the star operation:
              \[ p^*  \cdot  p^* = p^* \\ (p^*)^* = p^*.\]
    \end{itemize}
\end{lemma}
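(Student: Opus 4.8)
First I would establish that all Kleene algebra operations preserve order. For addition, multiplication, this is a routine consequence of distributivity and idempotency: if $p \leq q$, i.e. $p + q = q$, then for any $r$ we have $(p + r) + (q + r) = (p + q) + r = q + r$, so $p + r \leq q + r$; similarly $rp + rq = r(p + q) = rq$ using left distributivity, giving $rp \leq rq$, and symmetrically on the right. For the star operation, the cleanest route is to use the induction rules: to show $p \leq q \implies p^* \leq q^*$, I would observe that $q q^* \leq q^*$ (from unfolding) and $p \leq q$, hence $p q^* \leq q q^* \leq q^*$; combined with $1 \leq q^*$ this gives $p q^* + 1 \leq q^*$, and applying right induction (with $q := 1$) yields $p^* \leq q^*$. (Alternatively, one can cite that this is the standard development in \cite{Kozen_1997}.)

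Next I would prove $p^* \cdot p^* = p^*$. The inequality $p^* \leq p^* \cdot p^*$ follows since $1 \leq p^*$ implies $p^* = 1 \cdot p^* \leq p^* \cdot p^*$ using the order-preservation just established. For the reverse direction $p^* \cdot p^* \leq p^*$, I would apply right induction: we need $p^* p^* + p^* \leq p^*$ — wait, more directly, set $r := p^*$ and use that $p p^* + 1 \leq p^*$ (from unfolding, $p^* = 1 + pp^*$) to get, by right induction with $q := p^*$: from $p \cdot p^* + p^* \leq p^*$ (which holds since $pp^* \leq p^*$ and $p^* \leq p^*$) we conclude $p^* \cdot p^* \leq p^*$. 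Combining both inequalities gives the equality.

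Finally, $(p^*)^* = p^*$. The inequality $p^* \leq (p^*)^*$ is immediate from unfolding applied to $p^*$, namely $(p^*)^* = 1 + p^* (p^*)^* \geq p^*$... actually more carefully: $(q)^* \geq q$ always holds for any $q$ since $(q)^* = 1 + q(q)^* \geq q \cdot 1 = q$ using $1 \leq (q)^*$; instantiate $q := p^*$. For $(p^*)^* \leq p^*$, I would apply right induction with $r := p^*$: since $p^* \cdot p^* \leq p^*$ (just proved) and $1 \leq p^*$, we have $p^* \cdot p^* + 1 \leq p^*$, and right induction (with $q := 1$, base $p := p^*$) yields $(p^*)^* \leq p^*$.

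I do not anticipate a genuine obstacle here, as these are textbook facts; the only mild subtlety is being disciplined about which induction rule (left vs. right) to invoke and with which instantiation of the metavariables, so that each step is a clean application of an axiom rather than hand-waving. Everything reduces to the unfolding identity, the two induction rules, and the order-preservation of $+$ and $\cdot$, which is why I would prove order-preservation first.
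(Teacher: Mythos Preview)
Your argument is correct. The paper itself does not supply a proof of this lemma: it is stated as a collection of well-known Kleene-algebra facts and used without further justification, so your write-up already gives more detail than the paper does. One bookkeeping point worth fixing, since you flagged it yourself: under the paper's labeling, the rule $pr + q \leq r \implies p^{*}q \leq r$ is the \emph{left} induction rule and $rp + q \leq r \implies qp^{*} \leq r$ is the \emph{right} one. The three instances you invoke---deriving $p^{*} \leq q^{*}$ from $p\,q^{*} + 1 \leq q^{*}$, deriving $p^{*}p^{*} \leq p^{*}$ from $p\,p^{*} + p^{*} \leq p^{*}$, and deriving $(p^{*})^{*} \leq p^{*}$ from $p^{*}p^{*} + 1 \leq p^{*}$---are all applications of \emph{left} induction in this convention, not right. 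The mathematics is unaffected.
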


A Kleene algebra with tests (KAT) is a Kleene algebra with an embedded Boolean algebra,
where the conjunction, disjunction, and identities in the Boolean algebra coincide with 
the addition, multiplication, and the identities of Kleene algebra.  
We refer to elements of this embedded Boolean algebra as \emph{tests}.

Given an algebraic theory, we can construct its \emph{free model} 
over a finite set \( \Sigma \), 
called the \emph{alphabet}~\cite{burrisCourseUniversalAlgebra1981}.  
The free model consists of all the terms formed by \( \Sigma \) modulo 
provable equivalences of the algebra. The operations of the free model are obtained 
by lifting the term-level operations to equivalence classes.

The above construction can be extended to the case of KAT and TopKAT, 
suppose that we are given two disjoint finite sets $K$ (the
\emph{action alphabet}) and $B$ (the \emph{test alphabet}).  Elements of $K$ and
$B$ are called \emph{primitive actions and primitive tests}, respectively. 
KAT terms over the alphabet \(K, B\) are defined with the following grammar:
\[t  \triangleq  b  \in  B  \mid  p  \in  K  \mid  1  \mid  0  \mid  t_{1} + t_{2}  \mid  t_{1}  \cdot  t_{2}  \mid  t^*  \mid  \overline{t_b},\]
where \(t_b\) does not contain primitive actions.
The \emph{free KAT} over \(K, B\), written $\KAT_{K,B}$, 
consists of terms over \(K, B\) modulo provable KAT equivalences.  
The tests of the free KAT are Boolean terms, i.e. terms formed by
primitive tests and Boolean operations modulo Boolean axioms.  A similar
construction applies to TopKAT, where an additional symbol \( \top \) was added 
as the largest element in the theory; we denote the free TopKAT over $K,B$ as
\(\TopKAT_{K, B}\).  We sometimes omit the alphabets \(K\) and \(B\) when they
are irrelevant or can be inferred.

In the paper, we frequently consider terms modulo provable equalities, i.e. in the
context of its corresponding free model.  For example, given \(t_{1}, t_{2}  \in  \KAT\),
we will say \(t_{1} = t_{2}\) when they are provably equal using the theory of KAT.
Although the free model seems trivial, it leads to simpler and more modular
proofs of some properties of algebraic theories, as we will see in~\Cref{sec: general completeness}.

Other important models that we will use in this paper are language (Top)KATs and
relational (Top)KATs, which we review here.  An \emph{atom} (short for ``atomic
test'') over a test alphabet \(B = \{b_{1}, b_{2},  \ldots , b_{n}\}\) is a sequence of the form
\[\hat{b_{1}}  \cdot  \hat{b_{2}}  \cdot   \cdots   \cdot \hat{b_{n}} \twhere \hat{b_{i}}  \in  \{b_{i}, \bar{b_{i}}\}.\] 
We denote atoms as \( \alpha ,  \beta ,  \gamma ,  \ldots \) and the set of all atoms as \(\At\).

A \emph{guarded string} (or \emph{guarded word}) over \(K, B\) is an alternation 
between atoms and primitive actions that starts and ends in atoms: 
\[ \alpha _{0}p_{1} \alpha _{1}  \cdots  p_{n}  \alpha _{n} \twhere p_{i}  \in  K,  \alpha _{i}  \in  \At;\] 
where each action is ``guarded'' by an atom.
A guarded string is similar to a program trace, where each program state is denoted by an atom; and primitive actions will cause a transition between program states.
We denote the set of all guarded
strings over alphabet \(K, B\) as \(GS_{K, B}\), and we will omit the alphabet
\(K, B\) when it is irrelevant or can be inferred from context.  The notation
\( \alpha  s\) denotes a guarded string starting with atom \( \alpha \) with the rest of the string
\(s\); similarly, \(s  \alpha \) denotes a guarded string that ends with atom \( \alpha \) with
rest of the string being \(s\).

\begin{definition}[Language/trace KAT~\cite{Kozen_Smith_1997}]
  The \emph{language KAT} (also called ``\emph{trace KAT}'') over an alphabet \(K, B\) is
  denoted as \(\mathcal{G}_{K, B}\), or simply \(\mathcal{G}\) if no confusion can arise.

  The elements are sets of guarded strings (called \emph{guarded languages}), 
  and the tests are sets of atoms.
  The additive identity 0 is the empty set, and the multiplicative identity 1 is
  the set of all the atoms \(\At\).  The addition operator is set union, and the
  multiplication operator is defined as follows:
    \[S_{1}  \diamond  S_{2}  \triangleq  \{s_{1}  \alpha  s_{2}  \mid  s_{1}  \alpha   \in  S_{1},  \alpha  s_{2}  \in  S_{2}\}.\]
    The star operation is defined non-deterministically 
    iterating the multiplication operator:
    \[S^*  \triangleq   \bigcup _{i  \in  \mathbb{N}} S^i \twhere S^0 = \At, S^{k+1} = S  \diamond  S^k.\]
\end{definition}

Another useful type of KAT are relational ones, where each element is a relation
\(R  \subseteq  X  \times  X\) over a fixed set \(X\).  In applications, the set $X$ typically
represents the set of all possible program states, and each relation $R$
represents a program by relating each possible input to the corresponding
output.

\begin{definition}[Relational KAT]
  A relational KAT is a KAT $\mathcal{R}$ consists of relations over a fixed set \(X\) 
  (though $\mathcal{R}$ need not contain every relation over $X$),
  and it is closed under the following operations. 
  The tests are all the relations that are subsets of the identity relation.  
  The additive identity 0 is the empty set, and
  the multiplicative identity is the identity relation:
  \[1  \triangleq  \{(x, x)  \mid  x  \in  X\}.\] The addition operator is set union, and the
  multiplication operation is relational composition:
  \[R_{1} ; R_{2} = \{(x, z)  \mid   \exists  y  \in  X, (x, y)  \in  R_{1}, (y, z)  \in  R_{2}\}.\] 
  Finally, the star operation is defined as:
  \[R^*  \triangleq   \bigcup _{i  \in  \mathbb{N}} R^i \twhere R^0 = 1, R^{k+1} = R ; R^k.\] We denote the
  class of all relational KATs as \(\REL\).
\end{definition}

\emph{TopKAT} extends the theory of KAT with the largest element \( \top \), i.e.
\( \top   \geq  p\) for all elements \(p\).  The \emph{language TopKAT} over an alphabet
\(K, B\) has the same carrier and operations as \(\mathcal{G}_{K_ \top , B}\), where \(K_ \top \) is
the set \(K\) joined with a new primitive action \( \top \); and the largest element
is the full language \(GS_{K_ \top , B}\).

The \emph{relational TopKAT} is a relational KAT that contains the complete relation:
\[ \top   \triangleq  \{(x, y)  \mid  x, y  \in  X\};\] we denote the set of all relational TopKATs as
\(\TopREL\).  It is known that there are equations that are valid in relational
TopKAT, but are not derivable by the axioms of
TopKAT~\cite{Zhang_de_Amorim_Gaboardi_2022}; however, by adding the axiom
\(p  \top  p  \geq  p\), the theory becomes complete over relational
TopKATs~\cite{Pous_Wagemaker_2022,Pous_Wagemaker_2023}.  
In this paper, instead of working with a more complex theory, 
we will show that TopKAT without any additional axiom already suffices 
for the purpose of encoding domain comparisons. 
Indeed, TopKAT is complete with respect to domain comparison inequalities,
which can be used to encode both incorrectness logic and Hoare logic.

In this paper, 
we will use \(\dom\) and \(\cod\) to denote the 
conventional (co)domain operators on relations, namely, for any relation \(R\):
\[
    \dom(R)  \triangleq  \{x  \mid   \exists  y, (x, y)  \in  R\} \\
    \cod(R)  \triangleq  \{y  \mid   \exists  x, (x, y  \in  R)\}.
\]
To demonstrate how TopKAT models (co)domain comparisons,
we take any relational TopKAT \(\mathcal{R}\) and two relations \(R_{1}, R_{2}  \in  \mathcal{R}\),
and we denote the complete relation as \( \top \):
\begin{lemma}[TopKAT encodes (co)domain comparison]\label{the: TopKAT encodes domain PRIMITIVE}
    \[
        R_{1}  \top   \supseteq  R_{2}  \top   \iff  \dom(R_{1})  \supseteq  \dom(R_{2}) \\
         \top  R_{1}  \supseteq   \top  R_{2}  \iff  \cod(R_{1})  \supseteq  \cod(R_{2})
    \]
\end{lemma}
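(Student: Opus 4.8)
The plan is to reduce both equivalences to a direct computation of the composites $R\top$ and $\top R$ inside a relational TopKAT over a set $X$. Unfolding the definition of relational composition and of the complete relation $\top = \{(x,y) \mid x, y \in X\}$, I expect to obtain
\[
    R\top = \dom(R) \times X,
    \qquad
    \top R = X \times \cod(R)
\]
for every relation $R$ in the algebra: a pair $(x,z)$ lies in $R\top$ exactly when there is some $y$ with $(x,y) \in R$ and $(y,z) \in \top$, and since $(y,z) \in \top$ holds for every $y, z \in X$ this amounts to $x \in \dom(R)$ together with $z \in X$; the identity for $\top R$ is symmetric. These composites are again members of $\mathcal{R}$ since a relational TopKAT is closed under composition and contains $\top$, so the comparisons in the statement are well posed.

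With these two identities in hand, the first claimed equivalence becomes the purely set-theoretic statement $\dom(R_1) \times X \supseteq \dom(R_2) \times X \iff \dom(R_1) \supseteq \dom(R_2)$. The implication $(\Leftarrow)$ follows at once from monotonicity of $A \mapsto A \times X$. For $(\Rightarrow)$ I would take an arbitrary $x \in \dom(R_2)$; then $x \in X$, so $X$ is non-empty and I can fix some $y \in X$; the pair $(x,y)$ lies in $\dom(R_2) \times X \subseteq \dom(R_1) \times X$, and therefore $x \in \dom(R_1)$. The second equivalence is then handled in exactly the same way from $\top R = X \times \cod(R)$ --- or, if one prefers, by instantiating the first equivalence in the opposite TopKAT $\mathcal{R}^{\op}$, where composition is reversed and $\dom$, $\cod$ swap roles, mirroring the strategy the paper announces for transferring domain results to codomain results.

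Since each step is a one-line manipulation of sets, I do not anticipate a genuine obstacle. The only subtlety is in the $(\Rightarrow)$ direction, which needs a witness $y \in X$ to populate the second coordinate of the pair; but any element of $\dom(R_2)$ already certifies that $X$ is non-empty, so the degenerate case $X = \emptyset$ (where both sides of each equivalence hold vacuously) requires no separate argument.
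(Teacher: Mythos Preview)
Your proof is correct. The paper states this lemma without proof, treating it as an immediate consequence of the definitions; your unfolding $R\top = \dom(R)\times X$ and $\top R = X\times \cod(R)$, followed by the monotonicity/non-emptiness argument, is exactly the standard verification one would supply, and your handling of the $X=\emptyset$ edge case is clean.
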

If we regard \(R_{1}\) and \(R_{2}\) as the input output relation of two programs,
which is typically encoded by KAT terms,
we can see that \(R_{1}  \top   \supseteq  R_{2}  \top \) reflects that  
the domain of \(R_{1}\) is larger than the domain of \(R_{2}\);
and similarly for the inequality \( \top  R_{1}  \supseteq   \top  R_{2}\).
Thus, given two KAT terms \(t_{1}, t_{2}  \in  \KAT_{K, B}\), we call inequalities like
\(t_{1}  \top   \geq  t_{2}  \top \) \emph{domain comparison inequalities},
and \( \top  t_{1}  \geq   \top  t_{2}\) \emph{codomain comparison inequalities}.
Notice that the term \( \top  t_{1}\) is a shorthand for \( \top   \cdot  i(t_{1})\),
where \(i\) is the inclusion function \(\KAT_{K, B}  \hookrightarrow  \TopKAT_{K, B}\).
In the rest of the paper, we will sometimes leave this inclusion function implicit.
These two forms of inequalities will be the focus of 
our completeness results in~\Cref{sec: domain completeness of TopKAT}.

We also know another class of TopKATs named \emph{general relational TopKATs},
which is denoted as \(\TopGREL\).
The top element of general relational TopKAT is not necessarily the complete relation,
but the largest relation in the model.
All equations in the general relational TopKAT can be derived using the theory of TopKAT.

However, the completeness of \(\TopGREL\) came at the cost of expressive power:
every predicate that is expressible using general relational TopKAT 
is already expressible using relational KAT~\cite{Zhang_de_Amorim_Gaboardi_2022},
so the extension with top, in the case of general relational TopKAT, 
does not grant any extra expressive power.
In~\Cref{the: TopGREL expressive power}, 
we show that this result is a simple corollary of our new reduction result.

We are also interested in maps between models:
A \emph{KAT homomorphism} \(f\) is a map between two KATs \(\mathcal{K}\) and \(\mathcal{K}'\)
s.t. it preserves the sorts and operations:
given a test \(b\) in \(\mathcal{K}\) then \(f(b)\) is a test in \(\mathcal{K}'\);
and all the KAT operations (complement, identities, addition, multiplication, and star) are preserved:
\begin{align*}
    f & : \mathcal{K}  \to  \mathcal{K}'\\
    f(\bar{b}) & = \overline{f(b)} \\  
    f(1) & = 1 \\  
    f(0) & = 0 \\
    f(p + q) & = f(p) + f(q) \\  
    f(p  \cdot  q) & = f(p)  \cdot  f(q) \\  
    f(p^*) & = f(p)^*.
\end{align*}
Similarly, a \emph{TopKAT homomorphism} is a KAT homomorphism that preserves the
largest element.

\subsection{Interpretation, Completeness, and Injectivity}\label{sec: completeness background}

Consider a KAT equation such as \(p  \cdot  b  \cdot  \bar{b} = 0\). To determine its
validity in a particular KAT \(\mathcal{K}\), we need to assign meaning to it by
interpreting each primitive as an element in \(\mathcal{K}\); that is, by defining a map
\(\hat{I}\) of type \(K + B  \to  \mathcal{K}\).  Such a map \(\hat{I}: K + B  \to  \mathcal{K}\) induces a
unique KAT homomorphism \(I : \KAT_{K,B}  \to  \mathcal{K}\) inductively defined on the term 
as follows:
\begin{equation}
    \begin{aligned}
        I(p)       &  \triangleq  \hat{I}(p)    & \twhere p  \in  K + B \\
        I(\overline{t_b}) &  \triangleq  \overline{I(t_b)} 
            & \text{\(t_b\) does not contain primitive actions} \\
        I(t_{1} + t_{2}) &  \triangleq  I(t_{1}) + I(t_{2})                     \\
        I(t_{1}  \cdot  t_{2}) &  \triangleq  I(t_{1})  \cdot  I(t_{2})                     \\
        I(t^*)     &  \triangleq  I(t)^*
    \end{aligned}
\end{equation}
In fact, every KAT homomorphism from a free model arises this way: there is a
bijection between functions of type \(K + B  \to  \mathcal{K}\) and KAT homomorphisms of type
\(\KAT_{K, B}  \to  \mathcal{K}\), for any KAT \(\mathcal{K}\).  
Because the homomorphism \(I\) and the function \(\hat{I}\) are equivalent, 
we will refer to them interchangeably as \emph{KAT interpretations} 
and denote both of them as \(I\).

The above result enables us to define a homomorphism from the free KAT just by
defining its action on the primitives; saving us time to check the equations
that a homomorphism must satisfy.  It also allows us to prove that two
interpretations are equal by arguing that they map the primitives to
equal values.

Given a KAT \(\mathcal{K}\), and two terms \(t_{1}, t_{2}  \in  \KAT_{K, B}\) we say that \(\mathcal{K}  \models  t_{1} = t_{2}\) if
\[ \forall  I : \KAT_{K, B}  \to  \mathcal{K}, I(t_{1}) = I(t_{2}).\] In particular, 
for two terms in the free model \(t_{1}, t_{2}  \in  \KAT_{K, B}\),
\(\KAT_{K, B}  \models  t_{1} = t_{2}\) is equivalent to \(t_{1} = t_{2}\).  
For a collection of models \(\mathsf{K}\), 
we say that \(\mathsf{K}  \models  t_{1} = t_{2}\) if for all \(\mathcal{K}  \in  \mathsf{K}\),
\(\mathcal{K}  \models  t_{1} = t_{2}\).  For example, \(\REL  \models  t_{1} = t_{2}\) means that \(t_{1} = t_{2}\) is
valid in all relational KATs.  All the above notations and terminologies can be
similarly extended to TopKAT.

Theories like KAT and TopKAT are designed to model practical
programs, so it is important to know if they can model all the desirable
equations between programs. If the theory of KAT can derive all the equalities
for a particular interpretation \(I\), namely:
\[\KAT_{K, B}  \models  t_{1} = t_{2}  \iff  I(t_{1}) = I(t_{2}),\]
we say that the theory of KAT is \emph{complete} with respect to \(I\).
Recall that \(\KAT_{K, B}  \models  t_{1} = t_{2}\) is equivalent to \(t_{1} = t_{2}\);  
thus, by definition, an interpretation \(I\) is complete if and only if it is injective.
One of such interpretation is the guarded string interpretation
\(G: \KAT_{K, B}  \to  \mathcal{G}_{K, B}\)~\cite{Kozen_Smith_1997},
defined by lifting the following action on the primitives:
\[
    G(b) = \{ \alpha   \mid  \text{\(b\) appears positively in \( \alpha \)}\}, \\
    G(p) = \{ \alpha  p  \beta   \mid   \alpha ,  \beta   \in  \At\}.
\]



In several previous works, the term ``free model'' refers to the range (set of
reachable elements) of a complete interpretation.  Since a complete
interpretation is an injective homomorphism, 
such interpretation induces an isomorphism on its range, 
thus our definition of free model is equivalent to these definitions.

Many previous proofs can also be explained by seeing complete interpretations as
injective homomorphisms: the proof for completeness of relational KATs
constructs an injective homomorphism $h$ from a language KAT into a relational
KAT~\cite{Kozen_Smith_1997}.  Since both \(G\) and \(h\) are injective
homomorphisms, \(h  \circ  G\) is also an injective homomorphism, hence a complete
interpretation.  Since \(h  \circ  G\) is a relational interpretation:
\[\KAT_{K, B}  \models  t_{1} = t_{2}  \implies  \REL  \models  t_{1} = t_{2}  \implies  h  \circ  G(t_{1}) = h  \circ  G(t_{2});\]
then the completeness of \(h  \circ  G\) implies
\((h  \circ  G)(t_{1}) = (h  \circ  G)(t_{2})  \iff  \KAT_{K, B}  \models  t_{1} = t_{2}\). Hence,
\[\KAT_{K, B}  \models  t_{1} = t_{2}  \iff  \REL  \models  t_{1} = t_{2},\]
i.e. the theory of KAT is complete with respect to relational KAT.



Besides using composition of injective homomorphisms, another technique commonly
used to prove injectivity is to construct a left inverse: 
if a (Top)KAT homomorphism \(f: \mathcal{K}  \to  \mathcal{K}'\) has a left inverse homomorphism \(g: \mathcal{K}'  \to  \mathcal{K}\) 
i.e. \(g  \circ  f = id_{\mathcal{K}}\), then \(f\) is injective.  
Notice that \(g\) does not need to be a homomorphism for \(f\) to be injective,
however, in the case where \(f\) is an interpretation, 
\(g\) being a homomorphism makes the equality \(g  \circ  f = id_{\mathcal{K}}\) easier to check.
Because both \(g  \circ  f\) and \(id_{\mathcal{K}}\) are all interpretations,
they are equal if and only if they have the same action on all the primitives.

Finally, we provide a shorthand for domain reasoning. 
For two terms \(t_{1}, t_{2}  \in  \KAT\), we write
\[\REL  \models  \dom(t_{1})  \geq  \dom(t_{2}),\] when 
\(\dom(I(t_{1}))  \supseteq  \dom(I(t_{1}))\) for all relational KAT interpretations \(I\);  
and similarly for relational TopKAT and general relational TopKAT.  
Then \Cref{the: TopKAT encodes domain PRIMITIVE} implies the following:
\begin{lemma}\label{the: top element represent domain}
    For two KAT terms \(t_{1}, t_{2}  \in  \KAT_{K, B}\):
    \begin{align*}
        \TopREL  \models  t_{1}  \top   \geq  t_{2}  \top  &  \iff  \REL  \models  \dom(t_{1})  \geq  \dom(t_{2}) \\
        \TopREL  \models   \top  t_{1}  \geq   \top  t_{2} &  \iff  \REL  \models  \cod(t_{1})  \geq  \cod(t_{2})
    \end{align*}
\end{lemma}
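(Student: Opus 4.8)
The plan is to unfold the two $\models$ judgements and reduce everything to \Cref{the: TopKAT encodes domain PRIMITIVE} applied pointwise; the only real content is a dictionary between relational-TopKAT interpretations (restricted to the KAT fragment) and relational-KAT interpretations. I would first record two observations. (i) Every relational TopKAT $\mathcal{R}$ over a set $X$ is in particular a relational KAT, and its $\top$ is by definition the complete relation $X \times X$; hence precomposing any TopKAT interpretation $J : \TopKAT_{K, B} \to \mathcal{R}$ with the inclusion $i : \KAT_{K, B} \hookrightarrow \TopKAT_{K, B}$ yields a relational-KAT interpretation $J \circ i$ with $J(t) = (J\circ i)(t)$ for every KAT term $t$. (ii) Conversely, given a relational-KAT interpretation $I : \KAT_{K, B} \to \mathcal{R}$ with $\mathcal{R}$ a relational KAT over $X$, the set inclusion $\mathcal{R} \hookrightarrow \powerSet{X \times X}$ is a KAT homomorphism into the full relational TopKAT over $X$; composing it with $I$ and sending $\top \mapsto X \times X$ gives a TopKAT interpretation $J : \TopKAT_{K, B} \to \powerSet{X\times X}$, and since $J \circ i$ and $I$ regarded as maps into $\powerSet{X\times X}$ are KAT homomorphisms agreeing on all primitives, they agree on all KAT terms, so $J(t) = I(t)$ for every KAT term $t$. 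Thus $\dom$ and $\cod$ of $I(t_1), I(t_2)$ range over exactly the same sets whether we quantify over relational-KAT interpretations or over relational-TopKAT interpretations.

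Then I would chain equivalences for the domain case. By definition $\TopREL \models t_1 \top \geq t_2 \top$ iff for every $\mathcal{R} \in \TopREL$ and every TopKAT interpretation $J : \TopKAT_{K, B} \to \mathcal{R}$ we have $J(t_1) \cdot \top \supseteq J(t_2) \cdot \top$, where $\top$ is the complete relation of $\mathcal{R}$. Instantiating \Cref{the: TopKAT encodes domain PRIMITIVE} at $R_i := J(t_i)$ rewrites this inequality as $\dom(J(t_1)) \supseteq \dom(J(t_2))$. Quantifying over all $\mathcal{R}, J$ and using (i)--(ii) to translate ``for all $J$'' into ``for all relational-KAT interpretations $I$'', this says exactly that $\dom(I(t_1)) \supseteq \dom(I(t_2))$ for all relational-KAT interpretations $I$, i.e.\ $\REL \models \dom(t_1) \geq \dom(t_2)$. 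The codomain line is identical after replacing $t\top$ by $\top t$, $\dom$ by $\cod$, and using the second half of \Cref{the: TopKAT encodes domain PRIMITIVE}.

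I do not expect a genuine obstacle: the statement is essentially a repackaging of \Cref{the: TopKAT encodes domain PRIMITIVE}. The one point that needs care is that the value of a KAT term must not depend on whether it is computed inside the sub-KAT $\mathcal{R}$ or inside the ambient model $\powerSet{X\times X}$, which is why I route observation (ii) through a KAT homomorphism $\mathcal{R} \hookrightarrow \powerSet{X\times X}$ and invoke the ``homomorphisms out of a free model agree iff they agree on primitives'' principle from \Cref{sec: completeness background}, rather than re-running an induction on term structure; a second small point is simply to note that in a relational TopKAT the top element \emph{is} the complete relation, so \Cref{the: TopKAT encodes domain PRIMITIVE} applies verbatim.
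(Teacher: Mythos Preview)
Your proposal is correct and matches the paper's approach: the paper merely states that the lemma ``follows from \Cref{the: TopKAT encodes domain PRIMITIVE}'' without spelling out the details, and what you have written is exactly the unfolding of that remark, with the additional (and appropriate) care of making explicit the back-and-forth between relational-KAT interpretations and relational-TopKAT interpretations via observations (i) and (ii). There is nothing to correct.
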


\section{Reduction, A New Perspective}
\label{sec: general completeness}

Our goal in this section is to construct a complete interpretation for TopKAT,
by reducing its theory to that of plain KAT.  In other words, any equation
between two TopKAT terms is logically equivalent to another equation between a
pair of corresponding KAT terms.  While this result is not
new~\cite{Zhang_de_Amorim_Gaboardi_2022, Zhang_de_Amorim_Gaboardi_2022_POPL,
  Pous_Wagemaker_2022}, we present a more streamlined proof that hinges on the
universal properties of free KATs and TopKATs, without relying explicitly on
language models.  Similar to previous works, we obtain the
decidability of the equational theory of TopKAT as a corollary of reduction.
However, because of the new notion of reduction,
our decidability result no longer depends on the completeness of the language TopKAT.  
Moreover, our technique helps us to construct complete models and interpretations 
simply by computation, as well as simplifying proofs of other results about TopKAT.


\subsection{Reduction on free models}\label{sec: reduction on free models}

We first note that any free KAT over an alphabet \(K, B\) is also a TopKAT,
where the largest element is \(( \sum  K)^*\). This fact can be seen by
straightforward induction.

\begin{lemma}\label{the: every free KAT is a TopKAT}
    Every free KAT over alphabet \(K, B\) forms a TopKAT.
\end{lemma}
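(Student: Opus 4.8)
The plan is to exhibit, for a free KAT $\KAT_{K,B}$, a candidate largest element and then verify the two defining properties of a TopKAT: that this element is above everything, and that adding it does not disturb the KAT structure (so all KAT axioms still hold — which is automatic, since we are not changing the carrier or any KAT operation). The natural candidate is $\top \triangleq (\sum_{p \in K} p)^*$, as suggested in the excerpt. Concretely, since the carrier of $\KAT_{K,B}$ consists of equivalence classes of KAT terms and $\top$ is itself such a term, the only thing to check is the inequality $t \leq \top$ for every KAT term $t$; the TopKAT axioms beyond "KAT axioms" amount to exactly this one order condition.

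First I would reduce the claim $t \le (\sum K)^*$ to a structural induction on the term $t$. Write $e \triangleq \sum_{p\in K} p$, so $\top = e^*$. The base cases are: primitive actions $p \in K$ satisfy $p \le e \le e^* = \top$ (using $e \le e^* $ from unfolding $e^* = 1 + e e^* \ge e\cdot 1 = e$, wait—more directly $e^* = 1 + e e^*$ gives $e^* \ge e e^*$ and $e^* \ge 1$, and from $e^* = 1 + e^* e$ hmm; cleanest is $e \le e\cdot e^*$? no). Let me instead use: $e^* \cdot e^* = e^*$ and $1 \le e^*$, hence $e = e \cdot 1 \le e \cdot e^* \le 1\cdot e^* + e e^* = e^*$ by unfolding. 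Primitive tests $b \in B$ satisfy $b \le 1 \le e^*$ since tests are below $1$ in any Boolean subalgebra; the constants satisfy $0 \le \top$ and $1 \le e^* = \top$ trivially. For the inductive cases, since all KAT operations preserve order (\Cref{the: well known fact about KA}), if $t_1 \le \top$ and $t_2 \le \top$ then $t_1 + t_2 \le \top + \top = \top$, $t_1 \cdot t_2 \le \top \cdot \top = \top$ using $e^* \cdot e^* = e^*$ (again \Cref{the: well known fact about KA}), and $t_1^* \le \top^* = (e^*)^* = e^* = \top$ again by that lemma. The complement case $\overline{t_b}$ is handled like the primitive-test case: $\overline{t_b}$ is a test, hence $\le 1 \le \top$, so no induction hypothesis is even needed there.

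The one subtlety — and the only place that deserves care rather than being "straightforward" — is that the candidate $\top$ must dominate \emph{every} element of the carrier, and the carrier is terms \emph{modulo} provable KAT-equality; so I must make sure the inequality $t \le \top$ is stable under the equivalence, which it is, because $\le$ in $\KAT_{K,B}$ is defined via the equivalence classes in the first place ($t \le \top \triangleq t + \top = \top$, a statement about equivalence classes). Equivalently, the induction above is really an induction establishing the \emph{provable} inequality $\mathsf{KAT} \vdash t \le (\sum K)^*$ for each representative term $t$, which then descends to the free model. I expect this bookkeeping to be the main (mild) obstacle; the algebraic content is entirely routine given \Cref{the: well known fact about KA}. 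Finally, I would note that none of the KAT axioms are touched — the KAT reduct of the resulting structure is $\KAT_{K,B}$ itself — so the structure is a genuine TopKAT, completing the proof.
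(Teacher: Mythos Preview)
Your proposal is correct and follows essentially the same approach as the paper: exhibit $(\sum K)^*$ as the candidate top and verify $t \le (\sum K)^*$ by structural induction on $t$, using the KA facts $e^* e^* = e^*$ and $(e^*)^* = e^*$ from \Cref{the: well known fact about KA} for the multiplicative and star steps. The paper handles the base cases slightly more tersely (grouping $0$, $1$, and all Boolean terms together via $\le 1$), and does not dwell on the equivalence-class bookkeeping you mention, but the argument is the same.
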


\begin{proof}
    Since \(\KAT_{K, B}\) is a KAT, we only need to show 
    the term \(( \sum  K)^*\) is the largest element of \(\KAT_{K, B}\),
    i.e. \[( \sum  K)^*  \geq  t,  \forall  t  \in  \KAT_{K, B}.\] 
    The above fact can be shown by induction on \(t\);
    some algebraic manipulations below use facts in~\Cref{the: well known fact about KA}:
    \begin{itemize}
        \item \(( \sum  K)^*  \geq  1\) (by unfolding rule),
              thus \(( \sum  K)^*\) is larger than \(0, 1\) and every Boolean term.
        \item \(( \sum  K)^*\) is larger than \( \sum  K\),
              which is larger than every primitive action.
        \item Given two terms \(t_{1}\) and \(t_{2}\),
              assume \(( \sum  K)^*\) is larger than both.
              Because \(( \sum  K)^* = ( \sum  K)^* + ( \sum  K)^*\)
              and addition preserves order,
              \[( \sum  K)^* = ( \sum  K)^* + ( \sum  K)^*  \geq  t_{1} + t_{2}\] 
        \item Given two terms \(t_{1}\) and \(t_{2}\),
              assume \(( \sum  K)^*\) is larger than both.
              Because \(( \sum  K)^* = ( \sum  K)^*  \cdot  ( \sum  K)^*\)
              and multiplication preserves order, 
              \[( \sum  K)^* = ( \sum  K)^*  \cdot  ( \sum  K)^*  \geq  t_{1}  \cdot  t_{2}.\]
        \item Given a term \(t\),
              if \(( \sum  K)^*  \geq  t\), then \(( \sum  K)^*  \geq  t^*\).
              Since \(( \sum  K)^* = (( \sum  K)^*)^*\) and star preserves order:
              \[( \sum  K)^* = (( \sum  K)^*)^*  \geq  t^*. \qedhere\]
    \end{itemize}
\end{proof}

Since every free KAT is a TopKAT, every KAT interpretation
\(I : \KAT  \to  \mathcal{K}\) induces a sub-KAT $\Img(I)  \subseteq  \mathcal{K}$,
and this sub-KAT happens to be a \emph{TopKAT}. Specifically, the image of $( \sum  K)^*$
in $\mathcal{K}$ is the largest element of $\Img(I)$, and the restricted
$I : \KAT  \to  \Img(I)$ is a TopKAT homomorphism.

This gives us a powerful tool to construct complete TopKAT interpretations.
Since we already know that the KAT interpretations \(G: \KAT  \to  \mathcal{G}\) and
\(h  \circ  G: \KAT  \to  \Img(h)\) are injective TopKAT homomorphisms, we can
construct complete TopKAT interpretations by \emph{composition}, 
if we can construct an injective TopKAT interpretation \(r\) of type
\(\TopKAT_{K, B}  \to  \KAT_{K_ \top , B}\):
\[\TopKAT_{K, B} \xrightarrow{r} \KAT_{K_ \top , B} \xrightarrow{G} \mathcal{G}_{K_ \top , B},\\  
  \TopKAT_{K, B} \xrightarrow{r} \KAT_{K_ \top , B} \xrightarrow{G} \mathcal{G}_{K_ \top , B}
  \xrightarrow{h} \Img(h).\] 

In fact, such an injective homomorphism can be obtained by lifting 
the embedding map \(K + B  \hookrightarrow  \KAT_{K_ \top , B}\):
\begin{align*}
    r   & : K + B  \to  \KAT_{K_ \top , B}             \\
    r(p) &  \triangleq  p.                   
\end{align*}
This homomorphism coincides with the \emph{reduction maps} of the same name in
previous works~\cite{Zhang_de_Amorim_Gaboardi_2022, Pous_Wagemaker_2023}.  More
concretely, we can picture $r$ as simply replacing the symbol \( \top \) in a TopKAT
term with \(( \sum  K_ \top )^*\), the largest element in \(\KAT_{K_ \top , B}\).

We will show that \(r\) is injective by constructing a left inverse for it.  
In fact, the left inverse \([-]_ \top \) simply interprets the \( \top \) primitive in \(\KAT_{K_ \top , B}\)
as the largest element.
\begin{lemma}\label{the: equivalence class is the inverse of reduction}
  The map \([-]_ \top : \KAT_{K_ \top , B}  \to  \TopKAT_{K, B}\), where each term is
  mapped to its corresponding equivalence class, 
  is defined by lifting the following action on the primitives:
  \begin{align*}
    [p]_ \top  &  \triangleq  p & \text{if } p  \in  K + B \\  
    [ \top ]_ \top  &  \triangleq   \top .
  \end{align*}
  The map \([-]_ \top \) is a TopKAT homomorphism.
\end{lemma}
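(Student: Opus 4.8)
The plan is to establish that $[-]_\top$ is a well-defined TopKAT homomorphism by appealing to the universal property of the free KAT $\KAT_{K_\top, B}$, rather than arguing directly about equivalence classes. Since $\KAT_{K_\top, B}$ is the free KAT over the alphabet $K_\top, B$, any function from the primitives $K_\top + B$ into a KAT $\mathcal{K}$ extends uniquely to a KAT homomorphism. First I would observe that $\TopKAT_{K,B}$ is in particular a KAT, so the specified action on primitives---namely $p \mapsto p$ for $p \in K + B$ and $\top \mapsto \top$---determines a unique KAT homomorphism $[-]_\top : \KAT_{K_\top, B} \to \TopKAT_{K, B}$. This already handles well-definedness (provably equal KAT terms are sent to provably equal, hence equal, elements of the free TopKAT) and preservation of all KAT operations for free.

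The remaining obligation is that $[-]_\top$ is not merely a KAT homomorphism but a \emph{TopKAT} homomorphism, i.e.\ that it preserves the largest element. By \Cref{the: every free KAT is a TopKAT}, the largest element of $\KAT_{K_\top, B}$ is $(\sum K_\top)^*$. So I must show
\[
    [(\textstyle\sum K_\top)^*]_\top = \top \quad \text{in } \TopKAT_{K, B}.
\]
Since $[-]_\top$ is a KAT homomorphism, it commutes with $+$, $\cdot$, and $(-)^*$, so $[(\sum K_\top)^*]_\top = \bigl([\sum K_\top]_\top\bigr)^* = \bigl(\sum_{p \in K}[p]_\top + [\top]_\top\bigr)^* = \bigl((\sum K) + \top\bigr)^*$. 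In a TopKAT, $\top \geq \sum K$, hence $(\sum K) + \top = \top$, so this equals $\top^*$. Finally $\top^* = \top$: the inequality $\top^* \geq \top$ holds by the unfolding rule ($\top^* = 1 + \top\top^* \geq \top \cdot 1 = \top$ using that multiplication preserves order), while $\top^* \leq \top$ holds because $\top$ is the largest element. This chain of equalities lives entirely in the free TopKAT, so each step is an instance of the TopKAT axioms applied to the relevant equivalence classes.

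I expect the main (very mild) obstacle to be bookkeeping about \emph{which} algebra each computation takes place in: the element $(\sum K_\top)^*$ is the top of $\KAT_{K_\top, B}$ viewed as a TopKAT, whereas the target identity $\top^* = \top$ is a fact about $\TopKAT_{K,B}$, and one must be careful that the homomorphism $[-]_\top$ is precisely the bridge that lets us transport the former into the latter. There is no genuine induction required---the universal property of the free model absorbs it---so compared to the prior literature this is where the streamlining pays off. I would present the argument in exactly the order above: invoke the universal property to get the KAT homomorphism, then compute the image of the top element using only that $[-]_\top$ respects $+, \cdot, (-)^*$ together with the TopKAT identities $(\sum K) + \top = \top$ and $\top^* = \top$.
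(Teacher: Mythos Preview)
Your proposal is correct and follows essentially the same approach as the paper: both invoke the universal property of $\KAT_{K_\top,B}$ to obtain the KAT homomorphism for free, and then verify preservation of top by computing that $[(\sum K_\top)^*]_\top = (\top + \sum K)^*$ equals $\top$ in $\TopKAT_{K,B}$. The only cosmetic difference is that the paper argues via the single inequality $(\sum K_\top)^* \geq \sum K_\top = \top + \sum K \geq \top$ (the reverse inequality being automatic), whereas you pass through the intermediate step $\top^* = \top$.
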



\begin{proof}
  Because this map defined by lifting on the primitives,
  it is automatically a KAT homomorphism.
  All we need to show is that \([-]_ \top \) preserves the top element, that is
  \([( \sum  K_ \top )^*]_ \top  = ( \sum  K_ \top )^*\) is the largest element in \(\TopKAT_{K, B}\).

  By construction of \(\TopKAT_{K, B}\), \( \top \) is the largest element in \(\TopKAT_{K, B}\). 
  Thus, to prove that \(( \sum  K_ \top )^*\) is also the largest element in \(\TopKAT_{K, B}\),
  it suffices to prove \(( \sum  K_ \top )^*  \geq   \top \): \[( \sum  K_ \top )^*  \geq   \sum  K_ \top  =  \top  +  \sum  K  \geq   \top . \qedhere\]
\end{proof}

\begin{theorem}[Reduction]
    \([-]_ \top \) is the right inverse of \(r\): \([-]_ \top   \circ  r  = id_{\TopKAT_{K, B}}\).
    More explicitly for all \(t  \in  \TopKAT_{K, B}\): \[\TopKAT_{K, B}  \models  [r(t)]_ \top  = t.\]
\end{theorem}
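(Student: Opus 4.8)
The goal is to show $[-]_\top \circ r = \mathrm{id}_{\TopKAT_{K,B}}$. The plan is to exploit the universal property of free TopKATs established in the preliminaries: both $[-]_\top \circ r$ and $\mathrm{id}_{\TopKAT_{K,B}}$ are TopKAT homomorphisms out of the free TopKAT $\TopKAT_{K,B}$, so it suffices to check that they agree on the primitives $K + B$ (and, since they are TopKAT homomorphisms, they automatically agree on $\top$, on $1$, $0$, and on all the KAT operations). So the entire argument reduces to a one-line computation on generators.

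First I would observe that $[-]_\top \circ r$ is indeed a TopKAT homomorphism: $r$ is a TopKAT homomorphism (it was constructed in the excerpt as the lift of the embedding $K+B \hookrightarrow \KAT_{K_\top,B}$, sending $\top$ to $(\sum K_\top)^*$, which is the top element there), and $[-]_\top$ is a TopKAT homomorphism by \Cref{the: equivalence class is the inverse of reduction}; composites of TopKAT homomorphisms are TopKAT homomorphisms. Hence both $[-]_\top \circ r$ and $\mathrm{id}_{\TopKAT_{K,B}}$ are TopKAT homomorphisms $\TopKAT_{K,B} \to \TopKAT_{K,B}$. Second, I would invoke the bijection between TopKAT homomorphisms out of a free TopKAT and functions on the primitives: two such homomorphisms are equal iff they have the same action on $K+B$. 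Third, the computation on primitives: for $p \in K+B$ we have $[r(p)]_\top = [p]_\top = p$, using the definition $r(p) \triangleq p$ and the defining clause $[p]_\top \triangleq p$; and $\mathrm{id}(p) = p$. So they agree on all primitives, hence are equal as homomorphisms.

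There is essentially no hard step here — the content has already been front-loaded into \Cref{the: every free KAT is a TopKAT} (which makes $\KAT_{K_\top,B}$ a TopKAT so that $r$ and $[-]_\top$ make sense as TopKAT maps) and \Cref{the: equivalence class is the inverse of reduction} (which does the real work of checking $[-]_\top$ preserves $\top$). The only subtlety worth spelling out is why it suffices to check agreement on primitives: this is precisely the universal property of the free TopKAT, analogous to the KAT statement made earlier in \Cref{sec: completeness background}, namely that a homomorphism out of a free model is uniquely determined by its restriction to the generators. I would state this explicitly rather than leave it implicit, since it is the one place where the "free model" machinery is doing something nontrivial. The final displayed form $\TopKAT_{K,B} \models [r(t)]_\top = t$ is then just the unpacking of $[-]_\top \circ r = \mathrm{id}$ evaluated at an arbitrary $t$, recalling that $\TopKAT_{K,B} \models s_1 = s_2$ is by definition equivalent to $s_1 = s_2$ in the free TopKAT.
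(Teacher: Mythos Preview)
Your proposal is correct and follows essentially the same approach as the paper's proof: both argue that $[-]_\top \circ r$ is a TopKAT interpretation out of the free TopKAT, hence determined by its action on primitives, and then compute that this action coincides with the identity. Your version merely spells out more explicitly why the composite is a TopKAT homomorphism (citing \Cref{the: equivalence class is the inverse of reduction} for $[-]_\top$) and names the universal property being invoked, whereas the paper compresses all of this into two sentences.
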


\begin{proof}
    Since \([-]_ \top   \circ  r : \TopKAT_{K, B}  \to  \TopKAT_{K, B}\) is a TopKAT interpretation,
    the action on the primitives uniquely determines the interpretation:
    because both \(r\) and \([-]_ \top \) are identity on the primitives,
    therefore \([-]_ \top   \circ  r\) is the identity interpretation on \(\TopKAT_{K, B}\).
\end{proof}

The above theorem matches one of the soundness condition of reductions in 
previous works~\cite{Zhang_de_Amorim_Gaboardi_2022,Kozen_Smith_1997,Pous_Rot_Wagemaker_2021},
which was typically proven by a monolithic induction on the structure of terms.
Our approach, on the other hand, relies on establishing fine-grained 
algebraic properties, like~\cref{the: every free KAT is a TopKAT,the: equivalence class is the inverse of reduction};
then the theorem follows simply by computing the action of \([-]_ \top   \circ  r\) on primitives.

Since \(r\) has a right inverse, it is indeed the injective interpretation we desired, 
and it is also a complete interpretation:
\[\TopKAT_{K, B}  \models  t_{1} = t_{2}  \iff  r(t_{1}) = r(t_{2}),\]
With the completeness of \(r\), we can already show the complexity of TopKAT.
The complexity results echos previous proofs~\cite{Zhang_de_Amorim_Gaboardi_2022,Pous_Wagemaker_2023},
but we are able to obtain this result without completeness of TopKAT language interpretation,
which is essential in previous proofs. 

\begin{corollary}[Complexity]\label{the: PSPACE-completeness of TopKAT}
  Given two terms \(t_{1}, t_{2}  \in  \TopKAT_{K, B}\), deciding whether these two terms
  are equal is PSPACE-complete.
\end{corollary}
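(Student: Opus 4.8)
The plan is to reduce the problem, in both directions, to the analogous decision problem for plain KAT, whose PSPACE-completeness is already known~\cite{Kozen_Smith_1997}.

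For the PSPACE upper bound, I would invoke the Reduction theorem: \(\TopKAT_{K, B}  \models  t_{1} = t_{2}\) if and only if \(r(t_{1}) = r(t_{2})\) in \(\KAT_{K_ \top , B}\). The map \(r\) is computable in polynomial time, since it merely replaces each occurrence of \( \top \) by the fixed term \(( \sum  K_ \top )^*\), whose size is \(O(|K| + 1)\); hence \(|r(t_{i})|\) is at most \(O(|t_{i}| \cdot (|K| + 1))\). Deciding equality of the two resulting KAT terms lies in PSPACE by the known complexity of KAT equivalence, so deciding TopKAT equality is in PSPACE as well.

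For PSPACE-hardness, I would reduce KAT equality to TopKAT equality along the inclusion \(i : \KAT_{K, B}  \hookrightarrow  \TopKAT_{K, B}\), i.e.\ by simply viewing a pair of KAT terms as TopKAT terms. The key step is conservativity: for \(t_{1}, t_{2}  \in  \KAT_{K, B}\) one has \(\TopKAT_{K, B}  \models  t_{1} = t_{2}\) if and only if \(\KAT_{K, B}  \models  t_{1} = t_{2}\), equivalently \(i\) is injective. To see this, observe that \(r  \circ  i : \KAT_{K, B}  \to  \KAT_{K_ \top , B}\) agrees on primitives with, hence (by the universal property of free KATs) equals, the homomorphism \( \iota \) induced by the alphabet inclusion \(K + B  \hookrightarrow  K_ \top  + B\); and \( \iota \) has a left inverse, namely the KAT homomorphism \(\KAT_{K_ \top , B}  \to  \KAT_{K, B}\) that is the identity on \(K + B\) and sends \( \top \) to \(0\). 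Thus \( \iota  = r  \circ  i\) is injective, so \(i\) is injective. Since KAT equality is PSPACE-hard (already for \(B =  \emptyset \), where it specializes to equivalence of regular expressions) and the reduction is the identity on terms, deciding TopKAT equality is PSPACE-hard.

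The only genuinely non-routine point is the conservativity of TopKAT over KAT; the rest amounts to checking that \(r\) is polynomial-time computable and assembling the two reductions. One could alternatively obtain conservativity from the completeness of the guarded-string interpretation \(G\) by noting that \(G  \circ  r  \circ  i\) is injective, but the left-inverse argument above is the most direct.
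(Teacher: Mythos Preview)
Your proposal is correct and follows essentially the same two-part strategy as the paper: an upper bound via the reduction \(r\) to KAT equality, and hardness by viewing KAT equality as a subproblem. Two small differences are worth noting. First, the paper trims the alphabet to the primitives actually occurring in \(t_{1},t_{2}\) before applying \(r\), so that \(|r(t_i)|\) is polynomial in \(|t_{1}|+|t_{2}|\); your bound \(O(|t_i|\cdot(|K|+1))\) is equally valid if \(|K|\) counts toward the input size. Second, the paper simply asserts that KAT equality is a subproblem of TopKAT equality, whereas you supply the missing justification---conservativity of TopKAT over KAT---via the left-inverse argument showing \(r\circ i\) (hence \(i\)) is injective. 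That extra care is an improvement over the paper's presentation, not a deviation from its approach.
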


\begin{proof}
    Deciding KAT equality is a sub-problem of deciding TopKAT equality,
    and KAT equality is PSPACE-hard \cite{Cohen_Kozen_Smith_1999};
    therefore TopKAT equality is PSPACE-hard.

    To decide the equality of \(t_{1}, t_{2}\),
    we first remove all the redundant primitives that do not appear in \(t_{1}, t_{2}\)
    from the alphabet \(K, B\). Then we compute \(r(t_{1})\) and \(r(t_{2})\),
    each taking polynomial space (of \(|t_{1}| + |t_{2}|\)) to store;
    and we use the standard algorithm \cite{Cohen_Kozen_Smith_1999}
    to decide whether \(r(t_{1}) = r(t_{2})\) in \(\KAT_{K_ \top , B}\),
    this will also take polynomial space.
    Hence, the decision procedure for TopKAT equality in PSPACE.

    Thus deciding TopKAT equality is PSPACE-complete.
\end{proof}

\subsection{Computing the complete interpretations}\label{sec: complete model for free}

Designing complete interpretations and models was not always easy.
In fact, in previous works \cite{Zhang_de_Amorim_Gaboardi_2022_POPL},
the authors made a mistake in the definition of language TopKAT,
which was fixed later \cite{Zhang_de_Amorim_Gaboardi_2022} 
by suggestion of Pous et al. \cite{Pous_Wagemaker_2022}.
However, with the results in~\Cref{sec: reduction on free models},
we can construct the complete interpretation just by composition,
and compute the complete model by computing the range of the complete interpretation.

We already know that there are two complete interpretations of TopKAT defined as follows:
\[\TopKAT_{K, B} \xrightarrow{r} \KAT_{K_ \top , B} \xrightarrow{G} \mathcal{G}_{K_ \top , B},\\  
\TopKAT_{K, B} \xrightarrow{r} \KAT_{K_ \top , B} \xrightarrow{G} \mathcal{G}_{K_ \top , B} \xrightarrow{h} \Img(h),\]
with a complete language model \(\mathcal{G}_{K_ \top , B}\), 
and a complete model consisting of relations \(\Img(h)\).

The operations in these models can be recovered by computing these maps.
For example, the multiplication operation in the language TopKAT can be computed as follows:
\[G  \circ  r(t_{1}  \cdot  t_{2}) = G(r(t_{1})  \cdot  r(t_{2})) = G(r(t_{1}))  \diamond  G(r(t_{2})).\]
Since \(r\) does not change the multiplication operation,
the multiplication in the language TopKAT is the same as in language KAT.
In fact, as \(r\) does not change any operation in KAT,
most operations in language TopKAT are the same as language KAT.
Thus, we only need to compute the top element in language TopKAT.

The top element in language TopKAT can be computed in the same fashion:
\[G  \circ  r( \top ) = G(( \sum  K_ \top )^*) = GS_{K_ \top , B},\]
i.e. the top element is just the complete language.

\begin{corollary}\label{the: language TopKAT for free}
    The language TopKAT inherits all the operations in language KAT,
    except the top element, which is defined as the full language.
    And such models are complete with \(G  \circ  r\) as a complete interpretation.
\end{corollary}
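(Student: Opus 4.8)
The plan is to assemble \Cref{the: language TopKAT for free} as an immediate corollary of the Reduction theorem together with the explicit computations already carried out just before the statement. Concretely, the interpretation $G \circ r : \TopKAT_{K, B} \to \mathcal{G}_{K_\top, B}$ is a composite of two injective homomorphisms: $r$ is injective because it has the right inverse $[-]_\top$ (the Reduction theorem), and $G$ is the guarded-string interpretation, which is the classical complete (hence injective) interpretation of KAT~\cite{Kozen_Smith_1997}. A composite of injective homomorphisms is an injective homomorphism, and by the discussion in \Cref{sec: completeness background} an injective interpretation is exactly a complete interpretation. This establishes the completeness claim: $\TopKAT_{K, B} \models t_1 = t_2 \iff G \circ r(t_1) = G \circ r(t_2)$.

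Next I would spell out why the language TopKAT inherits every language-KAT operation except $\top$. The point is that $r$ is defined by lifting the embedding $K + B \hookrightarrow \KAT_{K_\top, B}$, so $r$ is the identity on primitives and commutes with every KAT operation; thus for any KAT operation $\ast$, we have $G \circ r(t_1 \ast t_2) = G(r(t_1) \ast r(t_2)) = G(r(t_1)) \ast^{\mathcal{G}} G(r(t_2))$, where $\ast^{\mathcal{G}}$ is the corresponding operation in $\mathcal{G}_{K_\top, B}$ — exactly the language-KAT operation (the multiplication case $G \circ r(t_1 \cdot t_2) = G(r(t_1)) \diamond G(r(t_2))$ is displayed right before the statement, and $+$, $^*$, $0$, $1$, complement are identical). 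Hence the carrier and all KAT operations of the language TopKAT are literally those of $\mathcal{G}_{K_\top, B}$ viewed as a language KAT.

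For the top element I would just reproduce the one-line computation preceding the statement: $G \circ r(\top) = G((\sum K_\top)^*) = GS_{K_\top, B}$, using that $r(\top) = (\sum K_\top)^*$ is the largest element of $\KAT_{K_\top, B}$ (from the construction of $r$ in \Cref{sec: reduction on free models}) and that $G$ sends $(\sum K_\top)^*$ to the set of all guarded strings over $K_\top, B$. So the distinguished top element of the language TopKAT is the full guarded language, as claimed.

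I do not anticipate a genuine obstacle here — the corollary is essentially a bookkeeping consequence of the Reduction theorem and the standard completeness of $G$. The one point requiring mild care is stating precisely what ``the language TopKAT'' \emph{is}: it is the image $\Img(G \circ r)$ equipped with the operations transported along $G \circ r$, and one must observe that this coincides with $\mathcal{G}_{K_\top, B}$ as a carrier-plus-KAT-operations structure (the identification is exactly the homomorphism property of $G \circ r$ together with injectivity, so $G \circ r$ restricts to an isomorphism onto its range). Once that identification is made explicit, the completeness statement and the description of the operations both follow without further work.
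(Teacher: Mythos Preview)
Your proposal is correct and mirrors the paper exactly: the corollary carries no separate proof in the paper and is simply the summary of the two computations displayed immediately before it (the multiplication case and the top case), together with the observation from \Cref{sec: reduction on free models} that $G \circ r$ is a composite of injective TopKAT homomorphisms and hence a complete interpretation---all of which you reproduce faithfully. One terminological nit (inherited from a slip in the paper's own statement of the Reduction theorem): $[-]_\top$ is the \emph{left} inverse of $r$, since the equation is $[-]_\top \circ r = id$, and it is having a left inverse that yields injectivity.
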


In the same way, we know that complete models consisting of relations (a.k.a. general relational TopKAT) 
will have the same operations as relational KATs.
However, in this case the characterization of the computed top: \(h  \circ  G  \circ  r( \top )\)
is not as simple as the full language,
but we know it is the largest relation in the range of \(h  \circ  G  \circ  r\):

\begin{corollary}\label{the: general relational TopKAT for free}
    The general relational TopKAT inherits all the operations in relational KAT,
    except the top element is the largest relation.
    And such models are complete with \(h  \circ  G  \circ  r\) as a complete interpretation.
\end{corollary}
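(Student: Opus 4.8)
The plan is to follow the same recipe used just before \Cref{the: language TopKAT for free}, but ending the chain at a relational model instead of at the language model. That is, I would work with the composite
\[\TopKAT_{K, B} \xrightarrow{r} \KAT_{K_\top, B} \xrightarrow{G} \mathcal{G}_{K_\top, B} \xrightarrow{h} \Img(h),\]
where \(h : \mathcal{G}_{K_\top, B} \to \mathcal{R}\) is the injective homomorphism into a relational KAT \(\mathcal{R}\) from Kozen and Smith's completeness proof~\cite{Kozen_Smith_1997}, and argue that \(h \circ G \circ r\) is a complete TopKAT interpretation whose image is a general relational TopKAT with the stated operations.

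First I would record that all three maps are injective homomorphisms: \(r\) by the Reduction theorem (it has the right inverse \([-]_\top\) from \Cref{the: equivalence class is the inverse of reduction}), \(G\) since the guarded-string interpretation is complete for KAT, and \(h\) since it is injective. Consequently \(h \circ G \circ r\) is injective, and an injective interpretation is by definition complete, which already yields the ``complete with \(h \circ G \circ r\) as a complete interpretation'' clause. To see that the image carries the right TopKAT structure, I would invoke the observation following \Cref{the: every free KAT is a TopKAT}: the free KAT \(\KAT_{K_\top, B}\) is a TopKAT with largest element \((\sum K_\top)^*\), so the KAT interpretation \(h \circ G : \KAT_{K_\top, B} \to \mathcal{R}\) corestricts to a TopKAT homomorphism onto \(\Img(h \circ G)\), whose largest element is \((h\circ G)\bigl((\sum K_\top)^*\bigr)\). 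Since \(r\) sends \(\top\) to \((\sum K_\top)^*\), the composite \(h \circ G \circ r\) is then itself a TopKAT homomorphism, and \(\Img(h \circ G \circ r)\) is a sub-TopKAT of \(\mathcal{R}\): its carrier is a set of relations; its KAT operations are inherited verbatim from \(\mathcal{R}\) because \(r\), \(G\) and \(h\) each preserve every KAT operation (this is the point of \Cref{sec: reduction on free models} for \(r\), and \(G, h\) are KAT homomorphisms); and its top \((h\circ G\circ r)(\top)\) is, by order preservation applied to \(t \leq \top\) in \(\TopKAT_{K,B}\), the largest element of the whole image, i.e. the largest relation in that model. This is exactly the shape asserted, paralleling \Cref{the: language TopKAT for free}.

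The one point that requires care — and the reason the statement says ``largest relation'' rather than ``complete relation'' — is the computed top \((h\circ G\circ r)(\top) = (h\circ G)\bigl((\sum K_\top)^*\bigr)\). Unlike the language case, where \(G\circ r(\top)\) is the full language \(GS_{K_\top, B}\), the map \(h\) (hence \(h\circ G\)) need not be surjective onto \(\mathcal{R}\), so this relation is only the supremum of the reachable relations, not necessarily \(\{(x,y)\mid x,y\in X\}\). Hence the statement cannot be strengthened to say the top is the complete relation, and the plan is merely to verify that ``the largest relation in the model'' is the correct and tightest description obtainable from this construction. Beyond this subtlety, the corollary is a direct assembly of the Reduction theorem, \Cref{the: every free KAT is a TopKAT}, and the KAT completeness facts, with no genuinely hard step.
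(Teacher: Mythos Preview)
Your proposal is correct and follows essentially the same approach as the paper: the paper derives this corollary by the same ``compute the composite \(h\circ G\circ r\)'' recipe used for \Cref{the: language TopKAT for free}, noting that the KAT operations are unchanged by \(r\) and that \(h\circ G\circ r(\top)\) is only the largest relation in the range rather than the complete relation. Your write-up is in fact more explicit than the paper's, which simply says ``in the same way'' and states the corollary; one minor quibble is that \([-]_\top\) is the \emph{left} inverse of \(r\) (since \([-]_\top\circ r = \mathrm{id}\)), though the paper itself is inconsistent on this terminology.
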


Finally, to investigate whether we can use general relational TopKAT
to encode incorrectness logic,
we will provide a short proof that general relational TopKATs
are as expressive as relational KATs~\cite{Zhang_de_Amorim_Gaboardi_2022};
that is, every property on relations that can be encoded using general relational TopKAT,
is already encodable in the relational KAT.
Hence, adding a top element does not give extra expressive power in general relational TopKAT.


The original proof~\cite[Lemma 2]{Zhang_de_Amorim_Gaboardi_2022} 
encodes every TopKAT term using a KAT term,
and then uses two pages to prove the soundness of this encoding.
Here we show the aforementioned encoding is simply the reduction \(r\).

\begin{definition}
    Given two terms \(t_{1}, t_{2}  \in  \TopKAT\), and n primitives \(p_{1}, p_{2},  \ldots  , p_{n}  \in  K + B\),
    we say that an n-ary predicate \(P\) is \emph{expressible} by 
    equation \(t_{1} = t_{2}\) for a class of TopKATs \(\mathsf{K}\) 
    when for all interpretations \(I\) into TopKATs in \(\mathsf{K}\),
    the following equivalence holds:
    \[I(t_{1}) = I(t_{2})  \iff  P(I(p_{1}), I(p_{2}),  \ldots , I(p_{n})).\]
\end{definition}

\begin{theorem}[Expressiveness of general relational TopKAT]\label{the: TopGREL expressive power}
    Given an alphabet \(K, B\), an n-ary predicate \(P\) on relations,
    the predicate \(P\) over primitives \(p_{1}, p_{2},  \ldots  , p_{n}  \in  K\) is expressible in
    general relational TopKAT if and only if it is expressible in relational KAT.
\end{theorem}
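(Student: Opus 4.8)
The plan is to prove the two implications separately; the reverse implication is essentially bookkeeping, while the forward implication is where the reduction does the work.

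For the reverse implication, suppose $P$ over the relations $I(p_1),\dots,I(p_n)$ is expressible in $\REL$ by a KAT equation $e_1 = e_2$. Viewing $e_1,e_2$ as TopKAT terms (they contain no $\top$), I claim the same equation expresses $P$ in $\TopGREL$. Indeed, forgetting its top element, every general relational TopKAT is a relational KAT, and every TopKAT interpretation $I$ into it restricts to a relational KAT interpretation $I'$ that agrees with $I$ on $e_1$, $e_2$ and on each $p_i$; hence $I(e_1) = I(e_2)$ if and only if $I'(e_1) = I'(e_2)$, if and only if $P(I'(p_1),\dots,I'(p_n))$, if and only if $P(I(p_1),\dots,I(p_n))$.

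For the forward implication, suppose $P$ over $p_1,\dots,p_n \in K$ is expressible in $\TopGREL$ by a TopKAT equation $t_1 = t_2$ with $t_1,t_2 \in \TopKAT_{K,B}$. Let $\rho \colon \TopKAT_{K,B} \to \KAT_{K,B}$ be the TopKAT homomorphism that replaces every occurrence of $\top$ by $(\sum K)^*$; this is a well-defined TopKAT interpretation because, by \Cref{the: every free KAT is a TopKAT}, $(\sum K)^*$ is the largest element of $\KAT_{K,B}$. I claim the KAT equation $\rho(t_1) = \rho(t_2)$ expresses $P$ in $\REL$. Take any relational KAT interpretation $J \colon \KAT_{K,B} \to \mathcal{R}$. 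Then $J \circ \rho$ is a KAT homomorphism sending $\top$ to $J((\sum K)^*)$, which, since KAT homomorphisms preserve order and $(\sum K)^*$ is the top of $\KAT_{K,B}$, is the largest element of $\Img(J)$, hence of the sub-KAT $\Img(J\circ\rho)$. Thus $\Img(J \circ \rho)$ is a general relational TopKAT and $J\circ\rho \colon \TopKAT_{K,B} \to \Img(J\circ\rho)$ is a TopKAT interpretation into it; moreover $(J\circ\rho)(p_i) = J(p_i)$ because $\rho$ fixes primitives. Applying the hypothesis to $J \circ \rho$ gives
\[ J(\rho(t_1)) = J(\rho(t_2)) \iff (J\circ\rho)(t_1) = (J\circ\rho)(t_2) \iff P\bigl(J(p_1),\dots,J(p_n)\bigr), \]
and since $J$ was arbitrary, $P$ is expressible in $\REL$.

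The only real content, and hence the main obstacle, is the observation in the forward direction that $J \circ \rho$ lands in a \emph{general} relational TopKAT: one must check that $\Img(J\circ\rho)$ actually contains a largest element and that $\rho$ — and then $J$ — sends $\top$ to it. This is exactly where \Cref{the: every free KAT is a TopKAT} and the explicit description of the reduction (replace $\top$ by $(\sum K)^*$) are used; notably, in contrast to the monolithic induction used in \cite{Zhang_de_Amorim_Gaboardi_2022}, neither an induction on term structure nor the completeness of the reduction is needed here.
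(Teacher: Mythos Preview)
Your proof is correct and follows essentially the same approach as the paper's: compose a reduction-style TopKAT homomorphism with an arbitrary relational KAT interpretation $J$, observe that $\Img(J)$ is a general relational TopKAT, and invoke the expressibility hypothesis on the composite. The only (inessential) difference is that you reduce into $\KAT_{K,B}$ via the map $\rho$ sending $\top \mapsto (\sum K)^*$, whereas the paper reduces into $\KAT_{K_\top,B}$ via $r$; since injectivity of the reduction is irrelevant for this theorem, both choices work, and yours has the minor bonus that the expressing KAT equation lives over the original alphabet.
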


\begin{proof}
    A predicate expressible in relational KAT is also expressible 
    in general relational TopKAT using the same pair of terms,
    we only need to show the converse.
    Assume a predicate \(P\) is expressible in general relational TopKAT,
    then there exists two TopKAT terms \(t_{1}, t_{2}  \in  \TopKAT_{K, B}\) s.t. 
    for all general relational TopKAT interpretations \(I_ \top \):
    \[I_ \top (t_{1}) = I_ \top (t_{2})  \iff  P(I_ \top (p_{1}), I_ \top (p_{2}),  \ldots  , I_ \top (p_{n}));\]

    We take an arbitrary relational KAT interpretation \(I\) from \(\KAT_{K_ \top , B}\).
    Notice \(\Img(I)\), the range of \(I\), 
    is a relational KAT with the largest element \(I(( \sum  K)^*)\),
    i.e. \(\Img(I)\) is a general relational TopKAT.
    Because \(I\) is a KAT interpretation, 
    it preserves all the KAT operations and the largest element.
    Hence, \(I\) is a TopKAT homomorphism from \(\KAT_{K_ \top , B}\) to \(\Img(I)\).

    Then we can construct \(I  \circ  r: \TopKAT_{K, B}  \to  \Img(I)\),
    a general relational interpretation:
    \begin{align*}
        I(r(t_{1})) = I(r(t_{2}))
         &  \iff  I  \circ  r(t_{1}) = I  \circ  r(t_{2})                           \\
         &  \iff  P(I  \circ  r(p_{1}),  \ldots  , I  \circ  r(p_{n}))
            & \text{\(I  \circ  r\) is a \(\TopGREL\) interpretation} \\
         &  \iff  P(I(p_{1}),  \ldots  , I(p_{n}))
            & r(p_{i}) = p_{i}
    \end{align*}
    Thus the two KAT terms \(r(t_{1}), r(t_{2})  \in  \KAT_{K_ \top , B}\) also can express the predicate \(P\).
\end{proof}

Since the image of \(I\) is not necessarily a relational TopKAT,
where the top element is interpreted as the complete relation,
the above trick does not work for relational TopKAT.
It is also known that relational TopKAT is strictly more expressive than general relational TopKAT,
since relational TopKAT can encode incorrectness logic,
where general relational TopKAT cannot~\cite{Zhang_de_Amorim_Gaboardi_2022}.

\section{(Co)domain Completeness}\label{sec: domain completeness of TopKAT}

In general, TopKAT is not complete over relational models, which are crucial for
applications in program logics~\cite{Zhang_de_Amorim_Gaboardi_2022}.  However, it
was later showed that we can obtain a complete theory for relational
models by simply adding the axiom \(p  \top  p  \geq  p\) to the theory of
TopKAT~\cite{Pous_Wagemaker_2023}. 

In this paper, we take a different approach than Pous et al.~\cite{Pous_Wagemaker_2023}:
instead of extending the TopKAT framework, we will restrict the completeness result.
In particular, the encoding of incorrectness logic and Hoare Logic in TopKAT~\cite{Zhang_de_Amorim_Gaboardi_2022}
relies only on the ability of TopKAT to compare the domain and codomain of two
relations.  This raises the question of whether TopKAT suffices for proving such
properties; that is, whether the following completeness results hold: for
\(t_{1}, t_{2}  \in  \KAT_{K, B}\) (i.e. \( \top \) does not appear in \(t_{1}\) and \(t_{2}\))
\begin{align*}
    \REL  \models  \cod(t_{1})  \geq  \cod(t_{2}) &  \iff  \TopKAT  \models   \top  t_{1}  \geq   \top  t_{2} & \text{codomain completeness} \\
    \REL  \models  \dom(t_{1})  \geq  \dom(t_{2}) &  \iff  \TopKAT  \models  t_{1}  \top   \geq  t_{2}  \top  & \text{domain complete}
\end{align*}

In this section, we prove that these equivalences hold, even without the additional axiom.
However, they do \emph{not} hold if we allow terms that contain top.
For example, let \(t_{1}  \triangleq  p  \top  p\), and \(t_{2}  \triangleq  p\). Since \(p  \top  p  \geq  p\) holds in
relational TopKAT, thus \(\dom(p  \top  p)  \geq  \dom(p)\). 
However, \(p  \top  p  \top   \geq  p  \top \) is not provable in TopKAT, 
because the inequality is not valid with the language interpretation.
The incompleteness of codomain comparison can also be shown using the same example.

\subsection{Codomain completeness}

The core insight to prove the domain completeness result is 
to construct a specific relational interpretation \(h  \circ  i  \circ  G\),
where its codomain is equivalent to the complete TopKAT interpretation \(G  \circ  r\):
\[\cod(h  \circ  i  \circ  G(t)) = G  \circ  r( \top  t),\]
where \(i\) is the natural inclusion homomorphism \(i: \mathcal{G}_{K, B}  \hookrightarrow  \mathcal{G}_{K_ \top , B}\), 
that maps every language to itself;
and \(h\) is the classical embedding of language KAT into relational KAT~\cite{Kozen_Smith_1997},
which we will recall as follows:
\[h(L) = \{(s, s  \diamond  s')  \mid  s  \in  GS, s'  \in  L\}.\]
Although \(i\) will not change the outcome of \(G\),
it will add a new primitive action \( \top \) to the alphabet, hence changing the outcome of \(h\).
Such addition will equate the codomain of \(h  \circ  i  \circ  G(t)\) 
with the complete TopKAT interpretation \(G  \circ  r\) of \( \top  t\).
The proof of this equality is by simply computing both sides of the equation.

\begin{lemma}\label{the: codomain completeness core lemma}
    For any term \(t  \in  \KAT_{K, B}\),
    \[\cod(h  \circ  i  \circ  G(t)) = G  \circ  r( \top  t).\]
\end{lemma}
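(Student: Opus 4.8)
**The plan is to prove the equation by computing both sides explicitly as sets of guarded strings / pairs of guarded strings, and checking that they coincide.**

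First I would unfold the right-hand side. By definition of the reduction $r$, we have $r(\top t) = r(\top) \cdot r(t) = (\sum K_\top)^\star \cdot r(t)$, and since $t$ contains no $\top$, $r(t) = t$ (viewing $t$ as a term over the larger alphabet $K_\top, B$). Applying $G$ and using that it is a homomorphism, $G \circ r(\top t) = G((\sum K_\top)^\star) \diamond G(t) = GS_{K_\top, B} \diamond G(t)$. By the definition of $\diamond$, this is the set of guarded strings of the form $w \alpha s$ where $w\alpha \in GS_{K_\top,B}$ is arbitrary (so $\alpha$ ranges over all atoms, $w$ arbitrary) and $\alpha s \in G(t)$. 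Since for \emph{every} atom $\alpha$ and every $w$ ending appropriately we can prepend, this simplifies to: the set of all guarded strings over $K_\top, B$ whose final atom-suffix belongs to $G(t)$ --- more precisely, $G\circ r(\top t) = \{ v \,\mid\, v \in GS_{K_\top,B},\ v = w\alpha s \text{ for some } w, \text{ with } \alpha s \in G(t)\}$, i.e. guarded strings that "end with" a string in $G(t)$.

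Next I would unfold the left-hand side. First, $G(t)$ is a guarded language over $K, B$, and $i$ includes it unchanged into $\mathcal{G}_{K_\top, B}$. Then $h$ sends a language $L$ (over the alphabet $K_\top, B$, since $h$ is now computed in that larger alphabet) to $h(L) = \{(s, s \diamond s') \mid s \in GS_{K_\top,B},\ s' \in L\}$. So $h \circ i \circ G(t) = \{(s, s \diamond u) \mid s \in GS_{K_\top,B},\ u \in G(t)\}$, where $s \diamond u$ is defined only when the last atom of $s$ equals the first atom of $u$. Taking $\cod$ of this relation picks out the second components: $\cod(h \circ i \circ G(t)) = \{ s \diamond u \mid s \in GS_{K_\top,B},\ u \in G(t),\ \text{last atom of } s = \text{first atom of } u \}$. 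Writing $u = \alpha s'$ and $s = w\alpha$, this is exactly $\{ w\alpha s' \mid w\alpha \in GS_{K_\top,B},\ \alpha s' \in G(t) \}$, which matches the description of the right-hand side computed above.

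The two computations meet, so the equality holds; the remaining work is just to present these two unfoldings carefully and line them up. \textbf{The main obstacle} is bookkeeping about alphabets: one must be careful that $h$ on the left is applied over the \emph{extended} alphabet $K_\top, B$ (this is the whole point of inserting $i$ --- it changes which guarded strings $s$ get prepended, since $s$ may now contain the new action symbol $\top$), so that the arbitrary-prefix $w$ on the left genuinely ranges over all of $GS_{K_\top,B}$ and thereby matches the $GS_{K_\top,B} \diamond (-)$ arising from $G(r(\top)) = GS_{K_\top,B}$ on the right. If one mistakenly used $h$ over the original alphabet $K, B$, the prefixes would be too small and the equation would fail --- which is, in fact, precisely the phenomenon underlying the later incompleteness examples. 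A secondary minor point is handling the edge case where $G(t)$ is empty (both sides are then empty) and the case of the empty-action guarded strings (single atoms), where $s \diamond u$ degenerates; these cause no trouble but should be mentioned.
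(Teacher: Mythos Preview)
Your proposal is correct and follows essentially the same approach as the paper: both sides are unfolded to the common form $GS_{K_\top,B} \diamond G(t) = \{w\alpha s \mid w\alpha \in GS_{K_\top,B},\ \alpha s \in G(t)\}$, using that $G\circ r(\top) = GS_{K_\top,B}$, that $r(t)=t$ for top-free $t$, and that $h$ is taken over the extended alphabet $K_\top,B$. Your explicit emphasis on the alphabet bookkeeping for $h$ is exactly the point the paper makes by writing out the full chain of maps before computing.
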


\begin{proof}
    We explicitly write out the domain and codomain of the functions in
    the relational KAT interpretation \(h  \circ  i  \circ  G\) for the ease of the reader:
    \[\KAT_{K, B}
        \xrightarrow{G} \mathcal{G}_{K, B}
        \xrightarrow{i} \mathcal{G}_{K_ \top , B}
        \xrightarrow{h} \mathcal{P}(\mathcal{G}_{K_ \top , B}  \times  \mathcal{G}_{K_ \top , B}).\]
    In this case, \(h\) is a KAT homomorphism from \(\mathcal{G}_{K_ \top , B}\):
    \[h(S) = \{(s, s  \diamond  s_{1})  \mid  s  \in  GS_{K_ \top , B}, s_{1}  \in  S\}.\]
    Since the reduction \(r\) preserves terms without \( \top \),
    let \(t  \in  \KAT_{K, B}\) (i.e. \(t\) does not contain \( \top \)),
    \[G  \circ  r( \top ) = GS_{K_ \top , B} \\ G  \circ  r(t) = G(t).\]
    Therefore, for any term \(t  \in  \KAT_{K, B}\)
    \begin{align*}
        \cod(h  \circ  i  \circ  G(t))
         & = \{s  \alpha  s_{1}  \mid  s  \alpha   \in  GS_{K_ \top , B},  \alpha  s_{1}  \in  G(t)\} \\
         & = GS_{K_ \top , B}  \diamond  G(t)                          \\
         & = (G  \circ  r( \top ))  \diamond  (G  \circ  r(t))                     \\
         & = G  \circ  r( \top  t). \qedhere
    \end{align*}
\end{proof}

\Cref{the: codomain completeness core lemma} established a connection between 
the codomain operator and the language interpretation of TopKAT.
Then by completeness of the language interpretation, 
we will obtain the completeness of codomain comparison.

\begin{theorem}[Codomain completeness]\label{the: codomain completeness}
    Given two terms \(t_{1}, t_{2}  \in  \KAT_{K, B}\) (i.e. terms without \( \top \)),
    then codomain comparison is complete:
    \begin{align*}
        \REL  \models  \cod(t_{1})  \geq  \cod(t_{2}) &  \iff  \TopKAT  \models   \top  t_{1}  \geq   \top  t_{2}.
    \end{align*}
\end{theorem}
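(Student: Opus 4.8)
The plan is to prove the two directions separately: the left-to-right ``soundness'' direction is a direct appeal to \Cref{the: top element represent domain}, while the right-to-left ``completeness'' direction is obtained by threading the hypothesis through the completeness of the reduction \(r\), the completeness of the guarded-string interpretation \(G\), and \Cref{the: codomain completeness core lemma}.

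For soundness, suppose \(\TopKAT \models \top t_{1} \geq \top t_{2}\). Every relational TopKAT is in particular a TopKAT, so the inequality holds in all of them, i.e.\ \(\TopREL \models \top t_{1} \geq \top t_{2}\); by \Cref{the: top element represent domain} this is exactly \(\REL \models \cod(t_{1}) \geq \cod(t_{2})\).

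For completeness, the heart of the argument is the chain of equivalences
\begin{align*}
    \TopKAT \models \top t_{1} \geq \top t_{2}
      &\iff r(\top t_{1}) \geq r(\top t_{2}) \text{ in } \KAT_{K_\top, B} \\
      &\iff G\circ r(\top t_{1}) \supseteq G \circ r(\top t_{2}) \\
      &\iff \cod(h\circ i\circ G(t_{1})) \supseteq \cod(h \circ i\circ G(t_{2})).
\end{align*}
The first step rewrites \(\geq\) as the equation \(\top t_{1} + \top t_{2} = \top t_{1}\), uses that \(r\) is a TopKAT homomorphism to push it inside, and then uses that \(r\) is a complete (hence injective) interpretation into \(\KAT_{K_\top, B}\). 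The second step does the same with the complete interpretation \(G\), using that an injective idempotent-semiring homomorphism both preserves and reflects the natural order. The third step is immediate from \Cref{the: codomain completeness core lemma} applied to \(t_{1}\) and \(t_{2}\) (both in \(\KAT_{K, B}\) by hypothesis), since it is just an inclusion between two guarded languages. Now \(h \circ i \circ G\) is a relational KAT interpretation (a composite of KAT homomorphisms landing in a relational KAT over \(GS_{K_\top, B}\)), so if \(\REL \models \cod(t_{1}) \geq \cod(t_{2})\), instantiating at this particular interpretation yields \(\cod(h\circ i\circ G(t_{1})) \supseteq \cod(h\circ i\circ G(t_{2}))\), and the chain above gives \(\TopKAT \models \top t_{1} \geq \top t_{2}\).

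The point requiring the most care is that \(\REL \models \cdots\) enters only as a statement universally quantified over all relational interpretations, so it can be exploited in one direction only; this is why soundness must be routed through \Cref{the: top element represent domain} rather than simply read off the chain of equivalences in reverse. A secondary bookkeeping subtlety is that at every stage the inequality \(a \geq b\) should be turned into the equation \(a + b = a\) so the homomorphism and injectivity facts apply verbatim, and one must check that the \(h\) invoked in \Cref{the: codomain completeness core lemma} is the embedding over the extended alphabet \(K_\top\), matching the interpretation \(h \circ i \circ G\) used above.
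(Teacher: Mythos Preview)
Your argument is correct and matches the paper's proof: the paper phrases it as a cycle \(1 \Rightarrow 2 \Rightarrow 3 \Rightarrow 1\) through the same three statements (the relational validity, the specific instance at \(h \circ i \circ G\), and the TopKAT inequality), using \Cref{the: codomain completeness core lemma} and completeness of \(G \circ r\) for \(2 \Rightarrow 3\) and \Cref{the: top element represent domain} for \(3 \Rightarrow 1\), exactly as you do. One cosmetic slip: in your opening paragraph you call the ``soundness'' direction left-to-right and the ``completeness'' direction right-to-left, but the directions you then actually prove under those headings are the opposite (soundness is \(\TopKAT \Rightarrow \REL\), i.e.\ right-to-left in the displayed biconditional); the mathematics is unaffected.
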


\begin{proof}
    Given the natural inclusion homomorphism: \(i: \KAT_{K, B}  \to  \KAT_{K_ \top , B}\),
    we show that the following are equivalent:
    \begin{enumerate}
        \item \(\REL  \models  \cod(t_{1})  \geq  \cod(t_{2}).\)
        \item \(\cod(h  \circ  i  \circ  G(t_{1}))  \geq  \cod(h  \circ  i  \circ  G(t_{2})).\)
        \item \(\TopKAT  \models   \top  t_{1}  \geq   \top  t_{2}.\)
    \end{enumerate}

    We first show that \(1  \implies  2\), by definition, \(\REL  \models  \cod(t_{1})  \geq  \cod(t_{2})\)
    implies \(\cod(I(t_{1}))  \geq  \cod(I(t_{2}))\) for all relational KAT interpretations \(I\).
    Because \(h  \circ  i  \circ  G\) is a relational KAT interpretation, so \(1  \implies  2\).

    We show \(2  \implies  3\), which uses the equality discussed above, 
    and proved in~\Cref{the: codomain completeness core lemma}:
    \begin{align*}
             & \cod(h  \circ  i  \circ  G(t_{1}))  \geq  \cod(h  \circ  i  \circ  G(t_{2}))           \\
         \iff  {} & G  \circ  r( \top  t_{1})  \geq  G  \circ  r( \top  t_{2})
             & \text{\Cref{the: codomain completeness core lemma}} \\
         \iff  {} & \TopKAT  \models   \top  t_{1}  \geq   \top  t_{2}.
             & \text{Completeness of \(G  \circ  r\)}
    \end{align*}

    Finally, we show \(3  \implies  1\), by \Cref{the: top element represent domain}:
    \[\TopKAT  \models   \top  t_{1}  \geq   \top  t_{2}  \implies  \TopREL  \models   \top  t_{1}  \geq   \top  t_{2}  \implies  \REL  \models  \cod(t_{1})  \geq  \cod(t_{2}). \qedhere\]
\end{proof}

\subsection{Domain completeness}

The domain completeness result can be derived from codomain completeness 
by observing properties of opposite TopKAT and the converse operator \((-)^{ \lor }\), 
both of which we will recall below.

For every TopKAT \(\mathcal{K}\), we can construct the opposite TopKAT \(\mathcal{K}^{\op}\) 
by reversing the multiplication operation, keeping the sorts and other operations unchanged:
\[p \mathbin{\hat{ \cdot }} q  \triangleq  q  \cdot  p,\]
where \(\hat{ \cdot }\) is multiplication in \(\mathcal{K}^{\op}\) and \( \cdot \) is multiplication in \(\mathcal{K}\).
By definition, \((-)^{\op}\) is a involution, that is \({(\mathcal{K}^{\op})}^{\op} = \mathcal{K}\).
Furthermore, \((-)^{\op}\) is a TopKAT functor,
this means all TopKAT homomorphisms \(h: \mathcal{K}  \to  \mathcal{K}'\) 
can be lifted to a TopKAT homomorphism on the opposite TopKAT \(h^{\op}: \mathcal{K}^{\op}  \to  {\mathcal{K}'}^{\op}\). 
The lifting \(h^{\op}\) is point-wise equal to \(h\):
\[ \forall  p  \in  \mathcal{K}, h^{\op}(p)  \triangleq  h(p).\]
The fact that \(h^{\op}\) is a TopKAT homomorphism can be proven by unfolding the definition,
and the functor laws are satisfied because \(h^{\op}\) is point-wise equal to \(h\).

There are two important homomorphisms involving opposite TopKAT:
\begin{align*}
    (-)^{ \lor } & : (X  \times  X)^{\op}  \to  (X  \times  X) &
    \op & : \TopKAT_{K, B}  \to  \TopKAT^{\op}_{K, B} \\  
    (R)^{ \lor } & = \{(b, a)  \mid  (a, b)  \in  R\}, & 
     \forall  p  \in  K + B, \op & (p) = p.
\end{align*}
The \((-)^{ \lor }\) is the relational converse operator, 
the rules of homomorphism can simply be proven by unfolding of definitions.
The crucial property of \((-)^{ \lor }\) is that it flips the domain and codomain:
\begin{equation}\label{the: converse flips domain to codomain}
    \dom(R^{ \lor }) = \cod(R).
\end{equation}
Hence, allowing us to flip the result about codomains and apply it to domains.

\(\op\) is a homomorphism from free TopKAT to its opposite TopKAT;
it can be defined by lifting the embedding function \(K + B  \hookrightarrow  \TopKAT_{K, B}\) on primitives.
Intuitively, given a term \(t  \in  \TopKAT\), 
\(\op(t)\) will flip all the multiplications in \(t\) recursively.
\begin{lemma}\label{the: injectivity of op}
    the left inverse of \(op\) can be obtained by lifting itself through the \((-)^{\op}\) functor,
    \[\op^{\op}: \TopKAT^{\op}  \to  (\TopKAT^{\op})^{\op} = \TopKAT.\]
    Recall \(\op^{\op}\) is pointwise equal to \(\op\), 
    thus \(\op^{\op}  \circ  \op: \TopKAT  \to  \TopKAT\) is the identity interpretation 
    because it preserves all the primitives.
    Thus, \(\op\) has a left inverse, hence it is injective:
    \[t_{1} = t_{2}  \iff  \op(t_{1}) = \op(t_{2}).\]
\end{lemma}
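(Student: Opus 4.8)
The plan is to flesh out the skeleton already indicated in the statement by supplying three ingredients: that $\op$ is a bona fide TopKAT homomorphism $\TopKAT_{K,B} \to \TopKAT^{\op}_{K,B}$; that applying the functor $(-)^{\op}$ to it yields a TopKAT homomorphism $\op^{\op}$ whose underlying map coincides with that of $\op$ but which, after using the involution $(\TopKAT^{\op}_{K,B})^{\op} = \TopKAT_{K,B}$, goes back the other way, i.e. $\op^{\op}: \TopKAT^{\op}_{K,B} \to \TopKAT_{K,B}$; and finally that $\op^{\op} \circ \op = id_{\TopKAT_{K,B}}$.

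First I would justify that $\op$ is well-defined as stated. Reversing multiplication turns any TopKAT into a TopKAT, so $\TopKAT^{\op}_{K,B}$ is a TopKAT; since it shares its carrier with $\TopKAT_{K,B}$, the embedding $K+B \hookrightarrow \TopKAT_{K,B}$ may be read as a function $K+B \to \TopKAT^{\op}_{K,B}$, and the universal property of the free TopKAT extends it uniquely to a TopKAT homomorphism --- this is $\op$. Next, because $(-)^{\op}$ is a TopKAT functor (recalled just before the lemma), it sends $\op: \TopKAT_{K,B} \to \TopKAT^{\op}_{K,B}$ to $\op^{\op}: (\TopKAT_{K,B})^{\op} \to (\TopKAT^{\op}_{K,B})^{\op}$; the involution identifies the source with $\TopKAT^{\op}_{K,B}$ and the target with $\TopKAT_{K,B}$, and the defining property that $h^{\op}$ is pointwise equal to $h$ gives $\op^{\op}(p) = \op(p) = p$ for every primitive $p \in K+B$.

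The composite $\op^{\op} \circ \op$ is then a TopKAT homomorphism $\TopKAT_{K,B} \to \TopKAT_{K,B}$, i.e. a TopKAT interpretation of the free TopKAT in itself. By the same ``an interpretation is determined by its action on primitives'' argument used to prove the Reduction theorem, it suffices to observe that $\op^{\op} \circ \op$ maps each primitive $p \in K+B$ to $\op^{\op}(\op(p)) = \op^{\op}(p) = p$, exactly as $id_{\TopKAT_{K,B}}$ does; hence $\op^{\op} \circ \op = id_{\TopKAT_{K,B}}$. Therefore $\op^{\op}$ is a left inverse of $\op$, so $\op$ is injective, which yields the equivalence $t_{1} = t_{2} \iff \op(t_{1}) = \op(t_{2})$ --- the forward direction being that $\op$ is a function, the backward direction injectivity.

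I expect the only real difficulty to be bookkeeping rather than mathematics: one has to keep straight, at each arrow, which copy of the shared carrier is playing the role of ``the TopKAT'' versus ``the opposite TopKAT'', and to apply the two facts ``$(-)^{\op}$ is an involution'' and ``$h^{\op}$ is pointwise equal to $h$'' at precisely the right places, so that the two homomorphisms $\op$ and $\op^{\op}$ are composable at all. Once the types line up, functoriality of $(-)^{\op}$ together with the universal property of the free TopKAT does all the remaining work, with no computation beyond unfolding definitions.
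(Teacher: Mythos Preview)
Your proposal is correct and follows exactly the approach the paper takes: the paper's own argument is embedded in the lemma statement itself (there is no separate proof environment), and your write-up simply fleshes out those two lines with the careful type-tracking and the explicit appeal to the universal property of the free TopKAT. The only minor slip is cosmetic---the identification of the source $(\TopKAT_{K,B})^{\op}$ with $\TopKAT^{\op}_{K,B}$ is by notation rather than by the involution, which is what collapses the target $(\TopKAT^{\op}_{K,B})^{\op}$ to $\TopKAT_{K,B}$---but this does not affect the argument.
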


Finally, since the elements in \(\TopKAT^{\op}\) are the same as \(\TopKAT\), 
which are TopKAT terms modulo provable TopKAT equalities,
theorems about TopKAT terms are also true for elements in \(\TopKAT^{\op}\).
In particular, codomain completeness (\Cref{the: codomain completeness})
also holds in \(\TopKAT^{\op}\): 
for all terms \(t_{1}, t_{2}  \in  \TopKAT\),
\begin{equation}\label[equiv]{the: op codomain completeness}
     \top   \cdot  \op(t_{1})  \geq   \top   \cdot  \op(t_{2})  \iff  \REL  \models  \cod(\op(t_{1})) = \cod(\op(t_{2})).
\end{equation}

\begin{theorem}[Domain Completeness]\label{the: domain completeness}
    For all terms \(t_{1}, t_{2}  \in  \KAT\), the following equivalence hold:
    \[\REL  \models  \dom(t_{1}) = \dom(t_{2})  \iff  \TopKAT  \models  t_{1}  \top   \geq  t_{2}  \top .\]
\end{theorem}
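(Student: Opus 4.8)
The plan is to reduce the domain-completeness statement to the already-established codomain-completeness result (\Cref{the: codomain completeness}) by transporting everything through the opposite TopKAT and the relational converse operator. The key bridge is \Cref{the: converse flips domain to codomain}, which says $\dom(R^{\lor}) = \cod(R)$; combined with the observation that $(-)^{\lor}$ is a relational TopKAT isomorphism between a relational model and its opposite, this lets us convert any domain statement into a codomain statement about the opposite term. So first I would show that for any KAT term $t$, comparing domains of relational interpretations of $t_1,t_2$ is the same as comparing codomains of relational interpretations of $\op(t_1), \op(t_2)$: this uses that $\op$ recursively reverses multiplications, that relational interpretations of $\op(t)$ are exactly the converses $(I(t))^{\lor}$ under a suitable interpretation (by an easy structural computation, or by noting $(-)^{\lor}$ is a homomorphism from $\mathcal R^{\op}$ to a relational model so precomposing gives a bijection on relational interpretations), and then applying \Cref{the: converse flips domain to codomain}.

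Next I would handle the algebraic side. On the left of the codomain-completeness equivalence we have statements of the form $\top \cdot \op(t_1) \ge \top \cdot \op(t_2)$ in $\TopKAT^{\op}$ — this is exactly \Cref{the: op codomain completeness}. The point is that a $\top$-prefixed inequality in the opposite TopKAT is, by unfolding the definition of $\mathbin{\hat{\cdot}}$, the same as a $\top$-suffixed inequality $t_1 \top \ge t_2 \top$ in $\TopKAT$ itself, after we also note that $\top$ and $\op$ both fix primitives so $\op(\top\cdot t) $ in the opposite corresponds to $t\cdot\top$ on the nose as equivalence classes of terms. Concretely: $\TopKAT \models t_1\top \ge t_2\top$ iff $\TopKAT^{\op} \models \top\mathbin{\hat\cdot} t_1 \ge \top \mathbin{\hat\cdot} t_2$ iff (renaming via the iso $\op$, using \Cref{the: injectivity of op}) $\TopKAT \models \top\cdot \op(t_1) \ge \top\cdot\op(t_2)$ read back in the opposite — the chain just has to be assembled carefully, keeping track of which side "multiplication" lives on.

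Putting the two halves together gives the chain
\[
\TopKAT \models t_1\top \ge t_2\top
\iff \TopKAT^{\op}\models \top\cdot\op(t_1) \ge \top\cdot\op(t_2)
\iff \REL \models \cod(\op(t_1)) \ge \cod(\op(t_2))
\iff \REL \models \dom(t_1)\ge\dom(t_2),
\]
where the middle step is \Cref{the: op codomain completeness} and the last step is the converse-operator argument of the first paragraph. I would also remark that $\op(t_1),\op(t_2) \in \KAT$ when $t_1,t_2 \in \KAT$ (since $\op$ only reshuffles multiplication and does not introduce $\top$), which is what makes \Cref{the: codomain completeness}/\Cref{the: op codomain completeness} applicable.

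The main obstacle I expect is purely bookkeeping rather than conceptual: keeping straight, in each "$\iff$", whether an expression such as $\top t$ denotes a product in $\TopKAT$ or in $\TopKAT^{\op}$, and verifying cleanly that relational interpretations of $\op(t)$ are precisely the converses of relational interpretations of $t$ (equivalently, that precomposing a relational interpretation with $(-)^{\lor}\colon \mathcal R^{\op}\to\mathcal R$ ranges over all relational interpretations as $\mathcal R$ ranges over $\REL$). Once that correspondence is pinned down, the theorem falls out by the displayed chain with no further computation.
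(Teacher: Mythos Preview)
Your proposal is correct and follows essentially the same route as the paper: both reduce domain completeness to codomain completeness via the opposite TopKAT and the relational converse, invoking \Cref{the: converse flips domain to codomain}, \Cref{the: op codomain completeness}, and \Cref{the: injectivity of op} in the same way. Your explicit remark that $\op(t_i)\in\KAT$ when $t_i\in\KAT$ (needed to apply \Cref{the: codomain completeness}) and your caution about tracking which multiplication is meant at each step are well placed; the paper's proof handles exactly these points, though somewhat more tersely.
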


\begin{proof}
    \( \impliedby \) direction is trivial by \Cref{the: top element represent domain};  
    and \( \implies \) direction can be derived as follows:
    let \(I\) be some relational interpretation,
    then \(I^{\op}(\op(-))^ \lor \) is also a relational interpretation:
    \[I^{\op}(\op(-))^ \lor : 
        \TopKAT \xrightarrow{\op} \TopKAT^{\op} \xrightarrow{I^{\op}} 
        (X  \times  X)^{\op} \xrightarrow{(-)^{ \lor }} (X  \times  X).\]
    Thus, we let \(I\) range over all relational interpretations:
    \begin{align*}
        & \REL  \models  \dom(t_{1})  \supseteq  \dom(t_{2})  \\
        &  \implies   \forall  I, \dom(I(t_{1}))  \supseteq  \dom(I(t_{2})) \\
        &  \implies   \forall  I, \dom(I^{\op}(\op(t_{1}))^ \lor )  \supseteq  \dom(I^{\op}(\op(t_{2}))^ \lor ) 
            &\text{specialize \(I\) as \(I^{\op}(\op(-))^ \lor \)}\\  
        &  \implies   \forall  I, \cod(I^{\op}(\op(t_{1})))  \supseteq  \cod(I^{\op}(\op(t_{1}))) 
            &\text{\Cref{the: converse flips domain to codomain}}\\
        &  \implies   \forall  I, \cod(I(\op(t_{1})))  \supseteq  \cod(I(\op(t_{1}))) 
            &\text{\(I^{\op}\) is pointwise equal to \(I\)}\\
        &  \implies   \top   \cdot  \op(t_{1})  \geq   \top   \cdot  \op(t_{2}) 
            &\text{\Cref{the: op codomain completeness}}\\
        &  \implies  \op( \top   \cdot  t_{1})  \geq  \op( \top   \cdot  t_{2}) 
            & \text{Definition of \(\op\)}\\
        &  \implies  t_{1}  \top   \geq  t_{2}  \top  & \text{\Cref{the: injectivity of op}}
    \end{align*}
\end{proof}

\begin{remark}
    Alternatively, \Cref{the: domain completeness} can also be proven 
    by constructing the following \(h'\):
    \begin{align*}
        h' & : \mathcal{G}_{K, B}  \to  \mathcal{P}(\mathcal{G}_{K, B}  \times  \mathcal{G}_{K, B})\\
        h' & (S_{1})  \triangleq  \{(s_{1}  \alpha  s,  \alpha  s)  \mid  s_{1}  \alpha   \in  S_{1},  \alpha  s  \in  GS_{K, B}\}.
    \end{align*}
    Then the proof would mirror that of \Cref{the: codomain completeness},
    replacing \(h\) with \(h'\) and replacing \(\cod\) with \(\dom\).
    However, the proof of \Cref{the: domain completeness} reveals more properties
    of maps like \((-)^{ \lor }\) and \(\op\), 
    thus we choose to present the current proof of \Cref{the: domain completeness} 
    instead of the alternative proof.
\end{remark}

\section{Related Works}


\itemTitle{Extensions of Kleene algebra and reduction:}
soon after the completeness of Kleene algebra was proven~\cite{Kozen_1994},
it was realized that adding an embedded Boolean algebra can help reasoning
about control structures, such system is referred to as
Kleene algebra with tests (KAT)~\cite{Kozen_Smith_1997,Cohen_Kozen_Smith_1999}.
Later KAT was further extended to reason about failure~\cite{Mamouras_2017},
indicator variables~\cite{Grathwohl_Kozen_Mamouras_2014},
domain~\cite{Desharnais_Möller_Struth_2006}, networks~\cite{Anderson_Foster_Guha_Jeannin_Kozen_Schlesinger_Walker_2014},
and relational reasoning~\cite{Antonopoulos_Koskinen_Le_Nagasamudram_Naumann_Ngo_2022}.
Kleene algebra has also been extended to reason about 
concurrency, as concurrent Kleene algebra~\cite{Hoare_van_Staden_Möller_Struth_Zhu_2016, Kappé_Brunet_Silva_Zanasi_2018}
and concurrent Kleene algebra with observations~\cite{Kappé_Brunet_Silva_Wagemaker_Zanasi_2020}.
Many of these extensions can be seen as Kleene algebra with extra hypotheses~\cite{Cohen_1995,Doumane_Kuperberg_Pous_Pradic_2019}.
Although many hypotheses make the theory undecidable~\cite{Kozen_1996,Kozen_2002,Doumane_Kuperberg_Pous_Pradic_2019},
many useful hypotheses can be eliminated via reduction~\cite{Pous_Rot_Wagemaker_2021}.
Thus, our new perspective on reduction could potentially lead to streamlining of various previous proofs, 
and more general proofs of completeness results.

\itemTitle{Top element:}
Tarski's relational algebra~\cite{tarski_CalculusRelations_1941} contains the addition, 
mulitiplication, and identity operation of KA;  
in addition, relational algebra also include a top element. 
Hence attempts to incorporat Kleene star into relational algebra 
effectively create a super theory of TopKAT.
Unfortuantly, several attempts at these algebras turn out to be undecidable
because of the presence of intersection and 
converse operations~\cite{andrekaAxiomatizabilityPositiveAlgebras2011, pous_PositiveCalculusRelations_2018}.
With the intersection and converse operators removed, 
top element is proven to be individually useful in Kleene algebra:
for example, Mamouras~\cite{Mamouras_2017} uses the top element to forget program states,
and Antonopoulos et al.~\cite{Antonopoulos_Koskinen_Le_Nagasamudram_Naumann_Ngo_2022} 
uses top to design forward simulation rules for relational verification, 
and claim that relational incorrectness logic~\cite{murray_UnderApproximateRelationalLogic_2020a} 
can be encoded using BiKAT extended with top.
The completeness and decidability of TopKAT was first studied by Zhang et al.~\cite{Zhang_de_Amorim_Gaboardi_2022},
and concluded that TopKAT is not complete with relational models.
Later, Pous et al.~\cite{Pous_Wagemaker_2022,Pous_Wagemaker_2023} showed that 
both TopKA and TopKAT is complete with relational model with one additional axiom: \(p  \top  p  \geq  p\),
and the theory remains PSPACE-complete, like KAT and TopKAT.
In this paper, we showed that TopKAT without the additional axiom is complete 
for a specific form of inequalities, namely when top only appears in the front or the end of the term.
Although this form of inequalities seem restrictive, 
they are enough to encode both Hoare and incorrectness logic~\cite{Zhang_de_Amorim_Gaboardi_2022}.

\itemTitle{Domain in KAT:}
The study of axiomatizing (co)domain in KAT has a long and rich history. 
Domain semiring~\cite{Desharnais_Struth_2011} 
and Kleene algebra with domain~\cite{Desharnais_Möller_Struth_2006}
were two popular yet different axiomatizations of (co)domain in Kleene algebra with tests.
These two axiomitizations turn out to coincide in a large class of semirings~\cite{Fahrenberg_Johansen_Struth_Ziemiánski_2021}.
Various applications for domain in KAT have been discovered, including modeling
program correctness, predicate transformers, temporal logics, 
termination analysis, and many more~\cite{Desharnais_Möller_Struth_2004}.
Many of these applications can even be efficiently automated~\cite{hofner_AutomatedReasoningKleene_2007}.
However, although the free relational model of these theories has been characterized~\cite{mclean_FreeKleeneAlgebras_2020},
the search for general complete interpretation remains unfruitful.
The complexity of these theories was recently shown to be EXPTIME-complete~\cite{Sedlár_2023},
a worse complexity class than PSPACE-complete for TopKAT.

\section{Conclusion And Open Problems}

In this paper, we exploit the homomorphic structure of reduction
to simplify the proof of various previous results~\cite{Zhang_de_Amorim_Gaboardi_2022}.
We have also showed that TopKAT is complete with respect to (co)domain comparison
in the relational models,
which lays a solid foundation for the use of TopKAT in (co)domain reasoning.

However, there are still several interesting unsolved problems about TopKAT.
Most of the incorrectness logic rules are written using hypotheses,
for example, the sequencing rule:
\[
    \frac{[a]~p~[b] \qquad [b]~q~[c]}{[a]~p  \cdot  q~[c]}
\]
corresponds to the implication \( \top  a p  \leq   \top  b  \land   \top  b p  \leq   \top  c  \implies   \top  a p q  \leq   \top  c\).
Although each individual inequality in the implication fits the desired form \( \top  t_{1}  \geq   \top  t_{2}\).
it is unclear whether implications of the form
\[ \top  t_{11}  \leq   \top  t_{12}  \land   \top  t_{21}  \leq   \top  t_{22}  \land   \cdots   \land   \top  t_{n1}  \leq   \top  t_{n2}  \implies   \top  t_{1}  \leq   \top  t_{2}\]
are complete with relational TopKAT or decidable.

Recently, there is an efficient fragment of KAT proposed, named 
\emph{Guarded Kleene algebra with tests}~\cite{Smolka_Foster_Hsu_Kappé_Kozen_Silva_2020}
or \emph{GKAT}.
This fragment not only enjoys nearly-linear time equality checking,
but also soundly models probabilistic computations as well. 
It would be interesting to see whether the completeness and decidability result of TopKAT
can be extended to GKAT, and whether the efficiency of GKAT will persist with the addition of top.

\bibliography{ref}

\end{document}